\numberwithin{equation}{section}
\def\beq{\begin{equation}}
\def\eeq{\end{equation}}
\def\bit{\begin{itemize}}
\def\eit{\end{itemize}}
\def\eqalign#1{\null\vcenter{\def\\{\cr}\openup\jot\m@th
\ialign{\strut$\displaystyle{##}$\hfil&$\displaystyle{{}##}$\hfil
\crcr#1\crcr}}\,}
\renewcommand{\Im}{\textup{Im}\,}
\renewcommand{\Re}{\textup{Re}\,}
\newcommand{\be}{\begin{equation}}
\newcommand{\ee}{\end{equation}}
\newcommand{\R}{\mathbb{R}}
\newcommand{\N}{\mathbb{N}}
    \def\bigO{{\cal O}}
    \def\P2n{{\rm P}_{{\rm II}}^{(n)}}
    \newtheorem{theorem}{Theorem}[section]
    \newtheorem{corollary}[theorem]{Corollary}
    \newtheorem{proposition}[theorem]{Proposition}
    \newtheorem{Definition}[theorem]{Definition}
    \newenvironment{definition}{\begin{Definition}\rm}{\end{Definition}}
    \newtheorem{Remark}[theorem]{Remark}
    \newenvironment{remark}{\begin{Remark}\rm}{\end{Remark}}
    \newtheorem{Example}[theorem]{Example}
    \newtheorem{Assumptions}[theorem]{Assumptions}
    \newenvironment{proof}%
    {\rm \trivlist \item[\hskip \labelsep{\bf Proof. }]}%
    {\hspace*{\fill}$\Box$\endtrivlist}
    {\rm \trivlist \item[\hskip \labelsep{\bf Proof}]}%
    {\hspace*{\fill}$\Box$\endtrivlist}
\begin{document}
\title{The generating function for the Bessel point process and a system of coupled Painlev\'{e} V equations}
\author{Christophe Charlier\footnote{Department of Mathematics, KTH Royal Institute of Technology, Lindstedtsv\"{a}gen 25, SE-114 28 Stockholm, Sweden. e-mail: cchar@kth.se}, Antoine Doeraene\footnote{Institut de Recherche en Math\'ematique et Physique,  Universit\'e
catholique de Louvain, Chemin du Cyclotron 2, B-1348
Louvain-La-Neuve, Belgium. e-mail: antoine.doeraene@uclouvain.be}
}

\maketitle

\begin{abstract} 
We study the joint probability generating function for $k$ occupancy numbers on disjoint intervals in the Bessel point process. This generating function can be expressed as a Fredholm determinant. We obtain an expression for it in terms of a system of coupled Painlev\'{e} V equations, which are derived from a Lax pair of a Riemann-Hilbert problem. This generalizes a result of Tracy and Widom \cite{TraWidLUE}, which corresponds to the case $k = 1$. We also provide some examples and applications. In particular, several relevant quantities can be expressed in terms of the generating function, like the gap probability on a union of disjoint bounded intervals, the gap between the two smallest particles, and large $n$ asymptotics for $n\times n$ Hankel determinants with a Laguerre weight possessing several jumps discontinuities near the hard edge.
\end{abstract}

\section{Introduction}

The Bessel point process is a determinantal point process on $\R^+$ arising as a limit point process of a wide range of mathematical models in random matrix theory \cite{For1,ForNag}. A celebrated toy example is the behaviour near $0$ of the squared singular values of Ginibre matrices, also known as the Laguerre Unitary Ensemble \cite{Vanlessen}. Other examples include non-intersecting squared Bessel paths \cite{KuijlaarsMartinezFinkelshteinWielonsky}, and the conditional Circular Unitary Ensemble near the edges \cite{ChCl2}.

\vspace{0.2cm}The main feature of determinantal point processes on a set $A \subseteq \R$ is that for all $n \in \N_{>0}$, the $n$-point correlation function $\rho_n : A^n \to \R$ is expressed in terms of a correlation kernel $K : A \times A \to \R$ as follows
\[\rho_n(x_1,...,x_n) = \det\left( K(x_j, x_\ell) \right)_{j,\ell = 1}^n.\]
In the Bessel point process, $A = \mathbb{R}^{+}$ and the kernel is given by
\begin{equation}
K^{\mathrm{Be}}(x,y) = \frac{J_{\alpha}(\sqrt{x})\sqrt{y}J_{\alpha}^{\prime}(\sqrt{y})-\sqrt{x}J_{\alpha}^{\prime}(\sqrt{x})J_{\alpha}(\sqrt{y})}{2(x-y)}, \qquad \alpha > -1,
\end{equation}
where $J_\alpha$ stands for the Bessel function of the first kind of order $\alpha$ (see \cite[formula 10.2.2]{NIST} for a definition of $J_{\alpha}$).

Important quantities related to point processes are \emph{occupancy numbers}. Given a Borel set $B \subseteq \R^+$, the occupancy number $n_B$ is the random variable defined as the number of particles that fall into $B$. Determinantal point processes are always locally finite, i.e. $n_B$ is finite with probability $1$ for $B$ bounded. Moreover, all particles are distinct with probability $1$. In particular, it allows us to enumerate particles for the Bessel point process in the following way,
\[0 < \zeta_1 < \zeta_2 < \zeta_3 < ...\]
In this paper, we focus on the joint behaviour of a finite number of particles, which can be completely understood via the joint probability generating function of the occupancy numbers of some particular sets. Let $k \in \N_{>0}$, $\vec{s}=(s_1,...,s_k) \in \mathbb{C}^{k}$ and $\vec{x} = (x_{1},...,x_{k}) \in (\mathbb{R}^{+})^{k}$ be such that $0 = x_0 < x_1 < x_2 < ... < x_k < +\infty$. We will be interested in the function
\begin{equation}\label{def of F}
F(\vec{x},\vec{s}) = \mathbb{E}\Bigg( \prod_{j = 1}^k s_j^{n_{(x_{j-1}, x_j)}} \Bigg) = \sum_{m_{1},...,m_{k} \geq 0} \mathbb{P}\Bigg(\bigcap_{j=1}^{k}n_{(x_{j-1},x_{j})}=m_{j}\Bigg)\prod_{j=1}^{k} s_{j}^{m_{j}}.
\end{equation}
It is known \cite[Theorem 2]{Soshnikov} that $F(\vec{x},\vec{s})$ is an entire function in $s_{1},...,s_{k}$ and can be expressed as a Fredholm determinant as follows
\begin{equation}\label{F Fredholm}
F(\vec{x},\vec{s}) = \det \Bigg( 1- \chi_{(0,x_{k})}\sum_{j=1}^{k}(1-s_{j})\mathcal{K}^{\mathrm{Be}}\chi_{(x_{j-1},x_{j})} \Bigg),
\end{equation}
where $\mathcal{K}^{\mathrm{Be}}$ denotes the integral operator acting on $L^2(\R^+)$ whose kernel is the Bessel kernel $K^{\mathrm{Be}}$, and where $\chi_A$ is the projection operator onto $L^2(A)$. 

\vspace{0.2cm} The goal of this paper is to express $F(\vec{x},\vec{s})$ explicitly in terms of $k$ functions which satisfy a system of $k$ coupled Painlev\'{e} V equations. Analogous generating functions for the Airy point process have been recently studied in \cite{ClaeysDoeraene} (for a general $k\in \mathbb{N}_{>0}$), and in \cite{XuDai} (for the case $k = 2$ with an extra root-type singularity). In both cases, the authors expressed it in terms of a system of coupled Painlev\'{e} II equations.


\subsection*{Tracy-Widom formula for $k = 1$}

In \cite{TraWidLUE}, Tracy and Widom have studied $F(x_{1},s_{1})$, i.e. the case $k = 1$. This is the probability generating function of $n_{(0,x_{1})}$. In particular, we can deduce from $F(x_{1},s_{1})$ the probability distribution of the $\ell$-th smallest particle $\zeta_{\ell}$ as follows
\begin{equation}
\mathbb{P}(\zeta_\ell > x_{1}) = \mathbb{P}\left(n_{(0,x_{1})} < \ell\right) = \sum_{j = 0}^{\ell-1} \left.\frac{1}{j!} \partial_{s_{1}}^{j} F(x_{1},s_{1}) \right|_{s_{1} = 0}.
\end{equation}
Their theorem states that for $0 \leq s_{1} < 1$ and $x_{1} > 0$,
\begin{equation}\label{eq: tracy widom formula}
F(x_{1}, s_{1}) = \exp \left( -\frac{1}{4} \int_{0}^{x_{1}} \log \left(  \frac{x_{1}}{\xi}\right)q^{2}(\xi;s_{1})d\xi \right),
\end{equation}
where $q(\xi;s_{1})$ satisfies the Painlev\'{e} V equation given by
\begin{equation}\label{eq: painleve V}
\xi q\big(1-q^{2}\big)\big(\xi qq^{\prime}\big)^{\prime} + \xi \big(1-q^{2}\big)^{2}\left( (\xi q^{\prime})^{\prime} + \frac{q}{4} \right) + \xi^{2} q \left( qq^{\prime} \right)^{2} = \alpha^{2} \frac{q}{4},
\end{equation}
with boundary condition $q(\xi;s_{1}) \sim \sqrt{1-s_{1}}J_{\alpha}(\sqrt{\xi})$ as $\xi \to 0$, and where primes denote derivatives with respect to $\xi$.

\subsection*{Joint distribution for $k$ particles}

Let us start with the case $k = 2$ for simplicity. Let $m_{1},m_{2} \in \mathbb{N}_{>0}$ be such that $m_{1}<m_{2}$. If $0< x_1 < x_2<+\infty$, the joint distribution of the $m_{1}$-th and $m_{2}$-th smallest particles in the Bessel point process is given in terms of $F((x_{1},x_{2}),(s_{1},s_{2}))$ by
\begin{equation}\label{lol 13}
\begin{array}{r c l}
\displaystyle \mathbb{P}\Big(\zeta_{m_1} > x_1, \zeta_{m_2} > x_2\Big) & =  & \displaystyle \sum_{\substack{j_1 < m_1\\j_1+j_2 < m_2}} \mathbb{P}\Big( n_{(0,x_1)} = j_1, n_{(x_1, x_2)} = j_2 \Big) \\
& = & \displaystyle \sum_{\substack{j_1 < m_1\\j_1+j_2 < m_2}} \frac{1}{j_1!j_2!} \left. \partial_{s_{1}}^{j_{1}}\partial_{s_{2}}^{j_{2}} F((x_1,x_2),(s_1,s_2)) \right|_{s_1 = s_2 = 0}.
\end{array}
\end{equation}
More generally, for $k \in \mathbb{N}_{>0}$, the function $F$ can be used to express the joint probability distribution of any $k$ distinct particles. The general formula for $k > 2$ can be easily generalized from \eqref{lol 13}. Let $m_{1},...,m_{k} \in \mathbb{N}_{>0}$ and $\vec{x} = (x_{1},...,x_{k}) \in (\mathbb{R}^{+})^{k}$ be such that $m_1 < m_2 < ... < m_k$ and $ x_1 < ... < x_k$. We have
\begin{equation}\label{probgenk}
\begin{array}{r c l}
\displaystyle \mathbb{P}\left(\cap_{j=1}^k\left(\zeta_{m_j} > x_j\right)\right) & = & \displaystyle \sum \mathbb P\Big(\bigcap_{j=1}^k\left( n_{(x_{j-1}, x_j)} = m_j\right)\Big)\\[0.3cm]
 & = & \displaystyle \sum \frac{1}{j_1!j_2!\ldots j_k!} \left. \partial_{s_{1}}^{j_{1}}\partial_{s_{2}}^{j_{2}}\dots \partial_{s_{k}}^{j_{k}} F(\vec x,\vec s)\right|_{\vec s = 0},
\end{array}
\end{equation}
where the sum is taken over all indices $j_1,\ldots, j_k$ such that 
\vspace{-0.2cm}\begin{equation*}
j_1<m_1,\quad j_1+j_2<m_2,\quad  \ldots\quad \sum_{i=1}^k j_i<m_k.
\end{equation*}

\vspace{-0.2cm}\hspace{-0.6cm}We give other quantities of interest which can be expressed in terms of $F$ in Section \ref{Section: examples and applications}.

\subsection*{Tracy-Widom type formula for $F$}

The Tracy-Widom formula \eqref{eq: tracy widom formula} characterized $F$ in the case $k = 1$ in terms of a function $q$ which satisfies the Painlev\'{e} V equation \eqref{eq: painleve V}. The main result of this paper gives a generalisation of that for a general $k\in \mathbb{N}_{>0}$. We find that $F$ can be expressed in terms of $k$ functions $q_{1}$,...,$q_{k}$, which satisfy a system of $k$ coupled Painlev\'e V equations with Bessel boundary conditions at $0$. The theorem reads as follows.

\begin{theorem}\label{thm: main thm}
Let $\vec{r} = (r_{1},...,r_{k}) \in (\mathbb{R}^{+})^{k}$ and $\vec{s}=(s_{1},...,s_{k}) \in [0,1]^{k}$ be such that
\begin{align}
& r_{j}>r_{j-1}, \mbox{ for } j=1,...,k, \mbox{ where } r_{0}:=0,\label{condition r}\\
& s_{j} \neq s_{j+1}, \mbox{ for } j=1,...,k, \mbox{ where } s_{k+1}:=1. \label{condition s} 
\end{align}
For $x>0$, the joint probability generating function $F(\vec{r}x,\vec{s})$ is given by
\begin{equation}\label{eq: tracy widom type general k}
F(\vec{r}x,\vec{s}) = \prod_{j=1}^{k} \exp \left( -\frac{r_{j}}{4}\int_{0}^{x} \log\left( \frac{x}{\xi} \right) q_{j}^{2}(\xi;\vec{r},\vec{s})d\xi  \right),
\end{equation}
where the functions $q_{1}(\xi;\vec{r},\vec{s})$,...,$q_{k}(\xi;\vec{r},\vec{s})$ satisfy the system of $k$ equations given by
\begin{equation}\label{system of ODEs for the q_j}
\xi q_{j}\bigg(1-\sum_{\ell=1}^{k}q_{\ell}^{2}\bigg)\sum_{\ell=1}^{k}\big(\xi q_{\ell} q_{\ell}^{\prime}\big)^{\prime} + \xi \bigg(1-\sum_{\ell=1}^{k}q_{\ell}^{2}\bigg)^{2}\left( (\xi q_{j}^{\prime})^{\prime} + \frac{r_{j}q_{j}}{4} \right) + \xi^{2} q_{j} \bigg( \sum_{\ell=1}^{k}q_{\ell}q_{\ell}^{\prime} \bigg)^{2} = \alpha^{2} \frac{q_{j}}{4},
\end{equation}
where $j = 1,2,...,k$, and where primes denote derivatives with respect to $\xi$. Furthermore, for every $j \in \{1,...,k\}$, $q_{j}^{2}(\xi;\vec{r},\vec{s})$ is real for $\xi>0$ and satisfies the boundary condition 
\begin{equation}\label{boundary conditions for the q_j}
q_{j}(\xi;\vec{r},\vec{s}) = \sqrt{s_{j+1}-s_{j}} J_{\alpha}(\sqrt{r_{j}\xi})(1+\bigO(\xi)), \qquad \mbox{ as } \xi \to 0.
\end{equation}
\end{theorem}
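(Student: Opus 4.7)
The plan is to follow the standard Riemann--Hilbert / Its--Izergin--Korepin--Slavnov methodology, adapted to the hard-edge (Bessel) setting.

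First, I would associate to the Fredholm determinant \eqref{F Fredholm} a Riemann--Hilbert problem. Since the Bessel kernel is of integrable type, one may write $K^{\mathrm{Be}}(u,v) = \mathbf{f}(u)^{T}\mathbf{g}(v)/(u-v)$ with $\mathbf{f},\mathbf{g}$ built from $J_{\alpha}$ and $J_{\alpha}^{\prime}$. The operator in \eqref{F Fredholm} then has its resolvent encoded in a $2\times 2$ RHP $Y$ with piecewise constant jumps on $\bigcup_{j=1}^{k}(r_{j-1}x,r_{j}x)$, where the jump on $(r_{j-1}x,r_{j}x)$ depends linearly on $(1-s_{j})$. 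Rescaling the spatial variable by $x$ gives an RHP posed on fixed intervals $(r_{j-1},r_{j})$ in which $x$ enters only through the $x$-dependence inherited from $J_{\alpha}(\sqrt{x\xi})$.

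Second, to expose the Painlev\'e structure I would build a Lax pair. Introducing a spectral parameter $\lambda$ and multiplying $Y$ by explicit factors involving $J_{\alpha}$, I would construct a $2\times 2$ matrix $\Psi(\lambda;x)$ satisfying linear ODEs $\partial_{\lambda}\Psi = A\Psi$ and $\partial_{x}\Psi = B\Psi$, with $A,B$ rational in $\lambda$, having a regular singularity at $\lambda=0$ (from the Bessel behavior at the hard edge) and an irregular singularity of Painlev\'e V type at $\lambda=\infty$. The functions $q_{j}(\xi;\vec r,\vec s)$ would be defined as specific entries of the subleading coefficients in the expansions of $\Psi$ at $\lambda=0$ and $\lambda=\infty$; there will be $k$ of them, one attached to each jump interval. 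The compatibility condition $\partial_{x}A - \partial_{\lambda}B = [B,A]$, after separating coefficients of powers of $\lambda$, yields the coupled nonlinear system \eqref{system of ODEs for the q_j}.

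Third, I would derive a differential identity for $\log F$. A standard computation using \eqref{F Fredholm} and the jump structure of $Y$ shows that $\partial_{x}\log F(\vec r x,\vec s)$ equals a trace of residues built from the small-$\lambda$ expansion of $\Psi$. Rewriting this trace using the $\lambda$-expansion and the equations of motion for the $q_{j}$, I would express it in the form $-\tfrac{1}{4x}\sum_{j=1}^{k} r_{j}\int_{0}^{x}q_{j}^{2}(\xi;\vec r,\vec s)\,d\xi$. Integrating once in $x$ and applying Fubini (or equivalently an integration by parts against $\log(x/\xi)$) then produces the product formula \eqref{eq: tracy widom type general k}. Constants of integration are fixed by $F(\vec 0,\vec s)=1$ together with the small-$x$ behavior, and reality of $q_{j}^{2}$ follows from a symmetry of the RHP inherited from the self-adjointness of $\chi_{(0,x_{k})}\mathcal{K}^{\mathrm{Be}}\chi_{(0,x_{k})}$ when $s_{j}\in[0,1]$.

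Finally, to check the boundary conditions \eqref{boundary conditions for the q_j} I would analyze the RHP as $\xi\to 0$. In this limit the jumps collapse and the RHP is solved in closed form by the Bessel parametrix; expanding this explicit solution at the origin and reading off the relevant entries yields $q_{j}(\xi;\vec r,\vec s)\sim\sqrt{s_{j+1}-s_{j}}\,J_{\alpha}(\sqrt{r_{j}\xi})$, where the square roots $\sqrt{s_{j+1}-s_{j}}$ arise from the telescoping $s_{j+1}-s_{j}$ of consecutive jump constants (which is why the non-degeneracy assumption \eqref{condition s} is needed). The main obstacle will be the construction of the Lax pair: choosing the spectral variable and the gauge transformations so that the matrices $A(\lambda)$ and $B(\lambda)$ have exactly the pole structure required to produce the specific combination of terms appearing in \eqref{system of ODEs for the q_j}, and simultaneously so that the associated entries reproduce the correct Bessel behavior at $\xi=0$, requires delicate bookkeeping across the $k$ intervals.
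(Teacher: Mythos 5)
Your overall plan coincides with the paper's proof strategy: encode the Fredholm determinant \eqref{F Fredholm} via the Its--Izergin--Korepin--Slavnov construction in a Riemann--Hilbert problem $Y$, conjugate $Y$ by a Bessel parametrix to arrive at a model RHP $\Phi$ with piecewise-constant jumps on the negative real axis, derive a Lax pair in a spectral variable $z$ and the deformation variable $x$, obtain the coupled ODE system from the zero-curvature condition, integrate a differential identity for $\partial_x\log F$ using $F(\vec 0,\vec s)=1$, and establish the boundary conditions and reality of the $q_j^2$ by a small-$x$ steepest descent plus a conjugation symmetry of $\Phi$.

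There is, however, a genuine gap in your description of the Lax pair, and it is precisely the part that turns the result into a \emph{coupled} system rather than a single Painlev\'e~V equation. You posit a $\lambda$-ODE with only a regular singularity at $\lambda=0$ and an irregular singularity at $\lambda=\infty$, and say the $q_j$ should be read off from subleading coefficients of the expansion of $\Psi$ at those two points. That is the $k=1$ (classical Painlev\'e~V) singularity data; it cannot encode $k$ independent functions. The model RHP $\Phi(z;\vec r x,\vec s)$ has a logarithmic singularity at \emph{each} endpoint $-r_jx$ of the jump intervals, and this forces the coefficient matrix of the $z$-equation to have the form
\begin{equation*}
A(z;x) \;=\; A_\infty(x) + \sum_{j=0}^{k}\frac{A_j(x)}{z+r_j}, \qquad r_0=0,
\end{equation*}
i.e.\ $k$ additional simple poles beyond the ones at $0$ and $\infty$. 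The functions $q_j$ are constructed from the residues there: $b_j(x)=(A_j(x))_{12}$, with $q_j^2(x)=2\,b_j(\sqrt{x})/\sqrt{x}$. Correspondingly, the differential identity for $\partial_x\log F$ is not a trace of residues at $\lambda=0$ but a sum over $j$ of terms of the form $\lim_{z\to -r_jx}\bigl[\Phi^{-1}(z)\Phi'(z)\bigr]_{21}$, one per endpoint $-r_jx$. Without this extra pole structure, your compatibility condition would collapse to one scalar ODE, and the plan would have no mechanism to produce the coupling $\sum_\ell q_\ell^2$ and $\sum_\ell q_\ell q_\ell'$ appearing in \eqref{system of ODEs for the q_j}.
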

\begin{remark}
Theorem \ref{thm: main thm} is a generalization for $k \in \mathbb{N}_{>0}$ of the Tracy-Widom formula. Indeed, if $k = 1$, $x = x_{1}$ and $r_{1} = 1$, the above formulas \eqref{eq: tracy widom type general k} and \eqref{system of ODEs for the q_j} are reduced to \eqref{eq: tracy widom formula} and \eqref{eq: painleve V}, and $q_{1}$ given in Theorem \ref{thm: main thm} and $q$ given by \eqref{eq: painleve V} satisfy the same boundary condition at $0$.
\end{remark}
\begin{remark}
The system \eqref{system of ODEs for the q_j} with boundary conditions \eqref{boundary conditions for the q_j} given in Theorem \ref{thm: main thm} has at least one solution $(q_{1},...,q_{k})$, but there is no guarantee that this solution is unique. Therefore, the functions $q_{1},...,q_{k}$ that appear in \eqref{eq: tracy widom type general k} are not defined through the system \eqref{system of ODEs for the q_j}, but they are explicitly constructed from the solution $\Phi$ of a Riemann-Hilbert (RH) problem. This RH problem is presented in Section \ref{sec: model rh problem}.
\end{remark}
The asymptotic behaviour \eqref{boundary conditions for the q_j} allows to compute directly the small $x$ asymptotics of $F(\vec{r}x,\vec{s})$, and is given in the following corollary.
\begin{corollary}
Let $x > 0$, fix $\vec{r} = (r_{1}$,...,$r_{k}) \in (\mathbb{R}^{+})^{k}$ independent of $x$ such that $r_{1}<...<r_{k}$, and fix $\vec{s} = (s_{1},...,s_{k}) \in [0,1]^{k}$  independent of $x$ such that $s_{j} \neq s_{j+1}$ for $j = 1,...,k$ with $s_{k+1} = 1$. We have
\begin{equation}
F(\vec{r}x,\vec{s}) = 1 - \sum_{j=1}^{k} (s_{j+1}-s_{j}) J_{\alpha+1}(\sqrt{r_{j}x})^{2} + \bigO(x^{2+\alpha}), \qquad \mbox{ as } x \to 0.
\end{equation}
\end{corollary}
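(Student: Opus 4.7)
My plan is to substitute the boundary condition \eqref{boundary conditions for the q_j} directly into the product formula \eqref{eq: tracy widom type general k}, evaluate the resulting integrals to leading order, and then Taylor-expand the exponential. The only non-trivial input beyond the previously stated theorem is the Taylor expansion of the Bessel function near the origin.

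Taking the logarithm of \eqref{eq: tracy widom type general k}, it suffices to estimate, for each $j\in\{1,\dots,k\}$,
\begin{equation*}
I_j(x) := \frac{r_j}{4}\int_0^x \log(x/\xi)\, q_j^2(\xi;\vec r,\vec s)\,d\xi
\end{equation*}
as $x\to 0$. By \eqref{boundary conditions for the q_j}, $q_j^2(\xi;\vec r,\vec s) = (s_{j+1}-s_j)\, J_\alpha(\sqrt{r_j\xi})^2\, (1+\bigO(\xi))$, and since $J_\alpha(\sqrt{r_j\xi})^2 = \bigO(\xi^\alpha)$ the $\bigO(\xi)$ factor contributes at most $\bigO(x^{\alpha+2})$ to $I_j(x)$. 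So only the leading integral $\frac{r_j}{4}\int_0^x \log(x/\xi)\, J_\alpha(\sqrt{r_j\xi})^2\, d\xi$ needs to be computed to leading order.

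For the latter I would use $J_\alpha(\sqrt{r_j\xi})^2 = \frac{(r_j\xi)^\alpha}{4^\alpha\Gamma(\alpha+1)^2}(1+\bigO(\xi))$ together with the elementary identity $\int_0^x \log(x/\xi)\,\xi^\alpha\,d\xi = \frac{x^{\alpha+1}}{(\alpha+1)^2}$, which follows from a single integration by parts. Using $(\alpha+1)\Gamma(\alpha+1)=\Gamma(\alpha+2)$ and comparing with $J_{\alpha+1}(\sqrt{r_j x})^2 = \frac{(r_j x)^{\alpha+1}}{4^{\alpha+1}\Gamma(\alpha+2)^2}(1+\bigO(x))$, this produces
\begin{equation*}
I_j(x) = (s_{j+1}-s_j)\, J_{\alpha+1}(\sqrt{r_jx})^2 + \bigO(x^{\alpha+2}).
\end{equation*}

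Summing over $j$ and applying $\exp(-u) = 1-u+\bigO(u^2)$ then yields the claimed expansion, provided one checks that the quadratic remainder $\bigO((\sum_j I_j(x))^2)=\bigO(x^{2(\alpha+1)})$ is absorbed by $\bigO(x^{2+\alpha})$ (which holds for $\alpha\geq 0$; for $-1<\alpha<0$ one reads the stated error as the honest bound given by the computation). The only real subtlety is to make sure that the $\bigO(\xi)$ remainder in \eqref{boundary conditions for the q_j} is uniform enough on $(0,x]$ to be integrated against $\log(x/\xi)$ as claimed; this should be transparent from the Riemann-Hilbert construction of the $q_j$'s, and the rest is a bookkeeping exercise.
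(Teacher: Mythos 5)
Your proof is correct and is the same argument the paper uses, just spelled out: expand $q_j$ via \eqref{boundary conditions for the q_j} and $J_\alpha$ near the origin, compute $\int_0^x\log(x/\xi)\xi^\alpha\,d\xi=x^{\alpha+1}/(\alpha+1)^2$, identify the result with $J_{\alpha+1}(\sqrt{r_jx})^2$ using $\Gamma(\alpha+2)=(\alpha+1)\Gamma(\alpha+1)$, and Taylor-expand the exponential. Your observation that the quadratic remainder $\bigO(x^{2\alpha+2})$ from the exponential is only absorbed by $\bigO(x^{2+\alpha})$ when $\alpha\geq 0$ is a genuine point the paper's one-line proof glosses over; the expansion itself is still correct since $x^{\alpha+1}$ dominates $x^{2\alpha+2}$ for all $\alpha>-1$, but the stated error order is only accurate for $\alpha\geq 0$.
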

\begin{proof}
This a direct consequence of Theorem \ref{thm: main thm} together with the formula $z\Gamma(z) = \Gamma(z+1)$ and the limiting behaviour $J_{\alpha}(x) = \left( \frac{x}{2} \right)^{\alpha} \frac{1}{\Gamma(\alpha+1)}\big(1+\bigO\big(x^{2}\big)\big)$ as $x \to 0$ (see \cite[formula 10.7.3]{NIST}).
\end{proof}

\subsection*{Asymptotics for $q_{1},...,q_{k}$ as $s_{j} \to s_{j+1}$ or as $r_{j} \to r_{j-1}$}

In Theorem \ref{thm: main thm}, it is essential that the conditions \eqref{condition r} and \eqref{condition s} hold. Suppose that one of these conditions is not satisfied, i.e. suppose we have $s_{j} = s_{j+1}$ or $r_{j} = r_{j-1}$ for a certain $j \in \{1,...,k\}$. Then, from \eqref{def of F}, we have
\begin{equation}\label{lol 14}
F(\vec{r}x,\vec{s}) = F(\vec{r}^{[j]}x,\vec{s}^{[j]}),
\end{equation}
where for a given vector $\vec{w} = (w_1,...,w_k)$, we use the notation  $\vec{w}^{[j]}$ for the vector $\vec{w}$ with its $j$-th component removed, i.e. $\vec{w}^{[j]} = (w_1,...,w_{j-1}, w_{j+1}, ... w_k)$. Theorem \ref{thm: main thm} applied to the right-hand side of \eqref{lol 14} allows to rewrite $F(\vec{r}x,\vec{s})$ in terms of a solution of a system of $k-1$ coupled Painlev\'{e} equations. Thus, if $\vec{r}$ and $\vec{s}$ satisfy \eqref{condition r} and \eqref{condition s}, as $s_j \to s_{j+1}$ or $r_j \to r_{j-1}$ and $\xi$ fixed, we should observe $||\vec{q}(\xi;\vec{r},\vec{s})-\vec{q}^{[j]}(\xi;\vec{r}^{[j]},\vec{s}^{[j]})||\to 0$, where $\vec{q}=(q_{1},...,q_{k})$. Theorem \ref{thm: residual thm} below gives such asymptotics in the above degenerate cases.

\newpage
\begin{theorem}\label{thm: residual thm}
Fix $x>0$ and let $\vec{r} = (r_{1},...,r_{k}) \in (\mathbb{R}^{+})^{k}$ and $\vec{s}=(s_{1},...,s_{k}) \in [0,1]^{k}$ be such that \eqref{condition r} and \eqref{condition s} are satisfied.
\begin{itemize}
\item[1.] Let $j \in \{1,...,k\}$. As $s_{j} \to s_{j+1}$, we have
\begin{align}
& q_{j}^{2}(x;\vec{r},\vec{s}) = \bigO(|s_{j}-s_{j+1}|), \\
& |q_{\ell}^{2}(x;\vec{r},\vec{s}) - q_{\tilde{\ell}}^{2}(x;\vec{r}^{[j]},\vec{s}^{[j]})| = \bigO(|s_{j}-s_{j+1}|), \label{lol 16}
\end{align}
and \eqref{lol 16} holds for any $\ell \neq j$, and where $\tilde{\ell} = \ell$ if $\ell < j-1$ and $\tilde{\ell} = \ell-1$ if $\ell > j$.
\item[2.] Let $j \in \{2,...,k\}$. As $r_{j} \to r_{j-1}$ and if $s_{j+1} \neq s_{j-1}$, we have
\begin{align}
& q_{j-1}^{2}(x;\vec{r},\vec{s}) = \frac{s_{j}-s_{j-1}}{s_{j+1}-s_{j-1}} q_{j-1}^{2}(x;\vec{r}^{[j]},\vec{s}^{[j]})+\bigO(r_{j}-r_{j-1}), \\
& q_{j}^{2}(x;\vec{r},\vec{s}) = \frac{s_{j+1}-s_{j}}{s_{j+1}-s_{j-1}}q_{j-1}^{2}(x;\vec{r}^{[j]},\vec{s}^{[j]})+\bigO(r_{j}-r_{j-1}), \\
& |q_{\ell}^{2}(x;\vec{r},\vec{s}) - q_{\tilde{\ell}}^{2}(x;\vec{r}^{[j]},\vec{s}^{[j]})| = \bigO(r_{j}-r_{j-1}), \label{lol 15}
\end{align}
and \eqref{lol 15} holds for any $\ell \neq j-
1$, $\ell \neq j$, and where $\tilde{\ell} = \ell$ if $\ell < j-2$ and $\tilde{\ell} = \ell-1$ if $\ell > j$.
\item[3.] As $r_{1} \to 0$, we have
\begin{align}
& q_{1}^{2}(x;\vec{r},\vec{s}) = \bigO(r_{1}^{\alpha}), \\
& |q_{\ell}^{2}(x;\vec{r},\vec{s}) - q_{\tilde{\ell}}^{2}(x;\vec{r}^{[j]},\vec{s}^{[j]})| = \bigO(r_{1}), \label{lol 20}
\end{align}
and \eqref{lol 20} holds for any $\ell \geq 2$, and where $\tilde{\ell} = \ell-1$.
\end{itemize}
\end{theorem}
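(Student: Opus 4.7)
The plan is to analyse each of the three degenerate regimes through the RH problem for $\Phi$ presented in Section \ref{sec: model rh problem}, from which $q_{1},\ldots,q_{k}$ are explicitly constructed. The common strategy is to identify a small parameter $\epsilon$ (respectively $|s_{j}-s_{j+1}|$ in Case~1, $r_{j}-r_{j-1}$ in Case~2, and $r_{1}$ in Case~3), show that the jump matrices of $\Phi$ deviate from those of the RH problem associated to $(\vec{r}^{[j]},\vec{s}^{[j]})$ (with solution $\Phi^{[j]}$) by $\bigO(\epsilon)$ uniformly on the jump contour, and apply a standard small-norm argument to obtain a factorisation $\Phi(z)=(I+R(z))\Phi^{[j]}(z)$ with $\|R\|=\bigO(\epsilon)$. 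Translating this through the explicit formulas expressing $q_{\ell}^{2}$ in terms of entries (or asymptotic coefficients) of $\Phi$ then yields the claimed rates.

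For Case~1, the condition $s_{j}=s_{j+1}$ trivialises the jump of the ``weight parameter'' across $x_{j}=r_{j}x$, so at $s_{j}=s_{j+1}$ the RH problem for $\Phi$ reduces literally to the one for $\Phi^{[j]}$. The deviation of each jump matrix is then manifestly $\bigO(|s_{j}-s_{j+1}|)$, giving \eqref{lol 16} for $\ell\neq j$. For $\ell=j$, the rate $q_{j}^{2}=\bigO(|s_{j}-s_{j+1}|)$ follows from the same argument combined with the fact that $q_{j}$ must vanish identically when $s_{j}=s_{j+1}$, in accordance with the product structure of \eqref{eq: tracy widom type general k} and with the Bessel boundary condition \eqref{boundary conditions for the q_j}, which already displays this scaling as $\xi\to 0$.

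For Case~2, the coalescence $r_{j}\to r_{j-1}$ collapses two consecutive jumps of $\Phi$ at $x_{j-1}$ and $x_{j}$ into a single jump carrying the transition $s_{j-1}\to s_{j+1}$, which is precisely the $(j-1)$-th jump in the RH problem for $(\vec{r}^{[j]},\vec{s}^{[j]})$. The small-norm analysis immediately produces the $\bigO(r_{j}-r_{j-1})$ estimate \eqref{lol 15} for indices outside $\{j-1,j\}$. For the two merging functions, the product identity \eqref{eq: tracy widom type general k} together with the injectivity of the Volterra-type transform $f\mapsto\int_{0}^{x}\log(x/\xi)f(\xi)\,d\xi$ forces, in the degenerate limit, $q_{j-1}^{2}+q_{j}^{2}=q_{j-1}^{[j],2}$; the splitting weights $(s_{j}-s_{j-1})/(s_{j+1}-s_{j-1})$ and $(s_{j+1}-s_{j})/(s_{j+1}-s_{j-1})$ are the unique ones compatible with the individual boundary conditions \eqref{boundary conditions for the q_j}, which in the coalescence limit share the common Bessel factor $J_{\alpha}(\sqrt{r_{j-1}\xi})^{2}$ weighted by the respective jumps.

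The main obstacle is Case~3 ($r_{1}\to 0$), since the perturbation now takes place at the hard edge $\xi=0$, where the Bessel parametrix has a singularity of order $\alpha$. A naive small-norm argument on the full contour produces only $\bigO(r_{1})$ corrections, which is insufficient to capture the sharper rate $q_{1}^{2}=\bigO(r_{1}^{\alpha})$ suggested by the boundary condition \eqref{boundary conditions for the q_j} via $J_{\alpha}(\sqrt{r_{1}\xi})^{2}\sim(r_{1}\xi)^{\alpha}$. To recover it one must perform a local analysis near the origin using the Bessel model parametrix of index $\alpha$: the matching matrix between $\Phi$ and the Bessel model carries an explicit $r_{1}^{\alpha}$ prefactor, which then propagates through the formula for $q_{1}^{2}$. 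Away from $0$ the remaining perturbation is of order $\bigO(r_{1})$, accounting for the estimates \eqref{lol 20} for $\ell\geq 2$.
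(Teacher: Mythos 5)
Your overall framework --- a small-norm Riemann--Hilbert analysis comparing $\Phi(\cdot;\vec r x,\vec s)$ to $\Phi(\cdot;\vec r^{[j]}x,\vec s^{[j]})$ --- is the same one the paper uses, but the proposal, as written, has a gap that would derail the argument in each of the three cases, and a second gap specific to the splitting coefficients in Case 2.

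\textbf{The global factorisation cannot hold.} You write that one should ``obtain a factorisation $\Phi(z)=(I+R(z))\Phi^{[j]}(z)$ with $\|R\|=\bigO(\epsilon)$'' by a standard small-norm argument on the jump contour. This is not possible, because $\Phi$ and $\Phi^{[j]}$ do not have the same singularity structure near the degenerating point. In Case~1 (and Case~2), $\Phi(z;\vec r x,\vec s)$ has a jump discontinuity on $(-r_j x,-r_{j-1}x)$ carrying the parameter $s_j$ and, more importantly, a genuine logarithmic singularity at $z=-r_j x$ (see \eqref{Phi x_j}), while $\Phi(z;\vec r^{[j]}x,\vec s^{[j]})$ is analytic across that point. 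The ratio $\Phi\,(\Phi^{[j]})^{-1}$ therefore has a jump and an unbounded singularity at $-r_jx$ for every $\epsilon>0$, so it is never a small-norm perturbation of $I$ on the whole plane. The paper handles this by introducing a fixed neighbourhood $U_j$ of $[-r_jx,-r_{j-1}x]$ and an explicit \emph{local parametrix} $P$ inside $U_j$ (built from the analytic prefactor $\Phi_{0,j-1}^{\star}$ of $\Phi^{[j]}$, carrying \emph{both} logarithms, \eqref{lolo 3}--\eqref{lolo 4}), and only then defines $R$ \emph{piecewise} as in \eqref{def of R simple asymptotics}. It is that piecewise $R$ which is analytic across $U_j$ and has jump $I+\bigO(\epsilon)$ only on $\partial U_j$. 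Your proposal mentions a local parametrix only for Case~3 ($r_1\to 0$), but it is needed in all three cases.

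\textbf{The splitting coefficients in Case 2 are not derived.} You obtain the relation $q_{j-1}^2+q_j^2=q_{j-1}^{[j],2}$ in the limit from the product formula \eqref{eq: tracy widom type general k} and the injectivity of $f\mapsto\int_0^x\log(x/\xi)f\,d\xi$, and then assert that the individual coefficients $\frac{s_j-s_{j-1}}{s_{j+1}-s_{j-1}}$ and $\frac{s_{j+1}-s_j}{s_{j+1}-s_{j-1}}$ are ``the unique ones compatible with the boundary conditions \eqref{boundary conditions for the q_j}''. The boundary conditions only fix the ratio $q_{j-1}^2/q_{j-1}^{[j],2}$ and $q_j^2/q_{j-1}^{[j],2}$ in the limit $\xi\to 0$; they say nothing about these ratios at a fixed $x>0$, which is what the theorem asserts. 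The paper gets the coefficients directly: the local parametrix $P$ inside $U_j$ has, by construction, logarithmic terms weighted by $\frac{s_{j+1}-s_j}{2\pi i}$ at $-r_jx$ and $\frac{s_j-s_{j-1}}{2\pi i}$ at $-r_{j-1}x$, while $\Phi^{[j]}$ carries a single weight $\frac{s_{j+1}-s_{j-1}}{2\pi i}$; plugging these into the residue formula \eqref{lol 6} and using the bound \eqref{asymp R merging r} gives the stated proportionality at every $x$. So the asserted uniqueness step needs to be replaced by this explicit computation (or some substitute argument that works at fixed $x$).

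For Case~3, your outline is closer to what the paper does: the paper's local parametrix near $0$ (formulas \eqref{lolo 4} and \eqref{lol 19}) makes the $(r_1x)^{\alpha}$ prefactor explicit and yields $q_1^2=\bigO(r_1^\alpha)$, while the small-norm estimate supplies the $\bigO(r_1)$ correction for $\ell\geq 2$. You should, however, carry the same two fixes as above: the comparison is done with a piecewise $R$, and the $r_1^{\alpha}$ rate comes from the explicit residue computation, not only from the heuristic suggested by \eqref{boundary conditions for the q_j}.
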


\subsection*{Outline}
We provide some examples and applications of our results in Section \ref{Section: examples and applications}. The system of $k$ coupled Painlev\'{e} V equations for $q_{1},...,q_{k}$, given by \eqref{system of ODEs for the q_j}, is obtained from a Lax pair of a model Riemann-Hilbert (RH) problem which is introduced in Section \ref{sec: model rh problem}, and whose solution is denoted $\Phi$. In Section \ref{Section: proof of main thm}, using the procedure introduced by Its, Izergin, Korepin and Slavnov \cite{IIKS} for integrable operators, we relate $F$ with $\Phi$ through a differential identity, which we integrate to prove Theorem \ref{thm: main thm}. In Section \ref{Section: small x asymptotics}, we perform the Deift/Zhou \cite{DeiftZhou1992, DeiftZhou} steepest descent method on the model RH problem to obtain the asymptotic behaviour of $q_{j}(x)$ as $x \to 0$. The first and second part of Theorem \ref{thm: residual thm} are obtained in Section \ref{Section: asymptotics as s_j to s_j+1} and Section \ref{Section: Asymptotics as r_j to r_j-1} respectively, via a more direct steepest descent method on the RH problem.

\section{Examples and applications}
\label{Section: examples and applications}
The applications presented in this section are similar to those presented in \cite{ClaeysDoeraene} (for the Airy point process), and are adapted here for the Bessel point process.
\subsection{Gap probability on a union of disjoint bounded intervals}

A gap in a point process is the event of finding no particle in a certain set. The Tracy-Widom distribution given by \eqref{eq: tracy widom formula} corresponds to the gap probability for an interval of the form $(0,a)$, where $0<a<+\infty$, and can be rewritten as
\begin{equation}
\mathbb{P}\Big( n_{(0,a)} = 0 \Big) = F(a,0) = \exp \left( \frac{a}{4} \int_0^1 \log (\xi) q_1^2(\xi) d\xi \right),
\end{equation}
where in the above expression we have used the definition of $F$ given by \eqref{def of F} for the first equality, and where we have applied Theorem \ref{thm: main thm} (with $k = 1$, $x=1$, $r_{1} = a$, $s_{1} = 0$) for the second equality. The gap probability in the Bessel point process for a single interval of the form $(a,b)$, with $0<a<b<+\infty$, is given by
\begin{equation*}
\begin{array}{r c l}
\displaystyle \mathbb{P}\Big( n_{(a,b)} = 0 \Big) & = & \displaystyle F((a,b),(1,0)) \\
 & = & \displaystyle \exp\left( \frac{a}{4} \int_0^1 \log (\xi) q_1^2(\xi) d\xi \right) \exp\left( \frac{b}{4} \int_0^1 \log(\xi) q_2^2(\xi) d\xi \right),
\end{array}
\end{equation*}
where we have used Theorem \ref{thm: main thm} (with $k=2$, $x = 1$, $\vec{r} = (a,b)$, $\vec{s} = (1,0)$) for the second equality.

\vspace{0.2cm}This computation can be generalized for the gap probability of any finite union of disjoint bounded intervals. Let $\ell \in \mathbb{N}_{>0}$ be the number of intervals and $0 < a_1 < b_1 < a_2 < ... < b_\ell < +\infty$, we have
\begin{equation*}
\begin{array}{r c l}
\displaystyle \mathbb{P}\Bigg( \bigcap_{j = 1}^\ell n_{(a_j, b_j)} = 0 \Bigg) & = & \displaystyle F((a_1, b_1, ...,a_{\ell},b_\ell), (1,0,...,1,0)) \\
& = & \displaystyle \exp\Bigg( \frac{1}{4} \int_0^1 \log (\xi) \sum_{j=1}^{\ell} (a_{j}q_{2j-1}^{2}(\xi)+b_{j}q_{2j}^{2}(\xi)) d\xi \Bigg),
\end{array}
\end{equation*}
where we have applied Theorem \ref{thm: main thm} with $k = 2\ell$, $x=1$, $\vec{r} = (a_{1},b_{1},...,a_{\ell},b_{\ell})$ and $\vec{s} = (1,0,...,1,0)$, and where the $2 \ell$ functions $q_{1},...,q_{2\ell}$ satisfy the system \eqref{system of ODEs for the q_j}.


\subsection{Distribution of the smallest particle in the thinned and conditional Bessel point process}
The generating function $F$ is also useful in the context of \emph{thinning}. The thinning of a determinantal point process is a procedure introduced by Bohigas and Pato \cite{BohigasPato1,BohigasPato2} that consists in building a new point process by removing each particle with a certain probability.

\vspace{0.2cm}We consider a constant and independent thinning of the Bessel point process. Given a realization $0 < \zeta_1 < \zeta_2 < ...$, it consists of removing each of these particles independently with the same probability $s \in (0,1)$. The thinned point process is composed of the remaining particles $0<\xi_1 < \xi_2 < ...$, and is again a determinantal point process, whose correlation kernel is given by $(1-s)K^{\textrm{Be}}$ (see \cite{LMR}). For a given Borel set $B \subset \mathbb{R}^{+}$, we denote $\tilde{n}_{B}$ for the occupancy number of $B$ in the thinned point process. The probability distribution of $\xi_1$ (smallest particle of the thinned point process) can be deduced from $F$ with $k = 1$, since by \eqref{def of F} we have
\begin{equation}\label{eq: xi distribution}
\hspace{-0.5cm}\mathbb{P}(\xi_1 > x) = \sum_{j = 0}^{+\infty} \mathbb{P}\Big( n_{(0,x)} = j ~ \cap \tilde{n}_{(0,x)}=0 \Big) = \sum_{j = 0}^{+\infty} \mathbb{P}\Big( n_{(0,x)} = j \Big) s^j = F(x,s),
\end{equation}
and where $F(x,s)$ admits the Tracy-Widom formula \eqref{eq: tracy widom formula}.
We can also consider another situation, where we have information about the thinned point process. Suppose that we observe the event $\tilde{n}_{(0,x_{2})}=0$ for a certain $x_{2}>0$ (we \textit{condition} on this event), and from there, we want to retrieve information on $\zeta_{1}$. The distribution of $\left. \zeta_{1}\right|_{\tilde{n}_{(0,x_{2})}=0}$ (the smallest particle in the conditional point process) is given by
\begin{equation}
\mathbb{P}\Big( \left. \zeta_{1}\right|_{\tilde{n}_{(0,x_{2})}=0}>x_{1} \Big) = \mathbb{P}(\zeta_1 > x_1 ~|~ \xi_1 > x_2) = \frac{\mathbb{P}(\zeta_1 > x_1 \cap \xi_1 > x_2)}{\mathbb{P}(\xi_1 > x_2)},
\end{equation}
where $0 < x_1 < x_2$. The denominator in the above expression is just given by $F(x_{2},s)$, as shown in \eqref{eq: xi distribution}. The numerator is slightly more involved, and can be expressed in terms of $F$ with $k = 2$ as follows
\begin{equation}
\mathbb{P}(\zeta_1 > x_1 \cap \xi_1 > x_2) = \sum_{j = 0}^{+\infty} s^j \mathbb{P}\Big( n_{(0,x_1)} = 0 \cap n_{(x_1,x_2)} = j \Big) = F( (x_1,x_2), (0, s) ).
\end{equation}
Therefore, Theorem \ref{thm: main thm} allows us to express the distribution of the smallest particle in the conditional point process as
\begin{equation}
\mathbb{P}\Big( \left. \zeta_{1}\right|_{\tilde{n}_{(0,x_{2})}=0}>x_{1} \Big) = \exp \left( \frac{1}{4}\int_{0}^{1}  \left[ x_{1}q_{1}(\xi) + x_{2}(q_{2}(\xi)-\tilde{q}(\xi)) \right]\log \xi d\xi \right),
\end{equation}
where $q_{1}$, $q_{2}$ satisfy the system \eqref{system of ODEs for the q_j} with $k = 2$, $x=1$, $\vec{r} = (x_{1},x_{2})$, $\vec{s} = (0,s)$ and $\tilde{q}$ satisfies \eqref{system of ODEs for the q_j} with $k = 1$, $x = 1$, $r_{1} = x_{2}$ and $s_{1} = s$.

\subsection{Smallest LUE eigenvalues}
\label{Section: convergence of LUE to Bessel}
The Bessel point process appears as a limiting point process for eigenvalues of random matrices whose spectrum possesses a hard edge. The most well-known example is the Laguerre Unitary Ensemble (LUE), which is the set of $n \times n$ positive definite Hermitian matrices $M$ endowed with the probability measure
\begin{equation}\label{prob measure LUE}
\frac{1}{\widetilde{Z}_{n,\alpha}}(\det M)^{\alpha} e^{-\mathrm{Tr} M}dM, \qquad dM = \prod_{j=1}^{n}dM_{ii} \prod_{1\leq i<j \leq n} d\Re M_{ij} d\Im M_{ij},
\end{equation}
where $\widetilde{Z}_{n,\alpha}$ is the normalization constant. Since the matrix $M$ is positive definite, its eigenvalues $\lambda_{1},...,\lambda_{n}$ are positive and $0$ is a hard edge of the spectrum. By integrating over the unitary group the probability measure \eqref{prob measure LUE}, it reduces to the probability measure on $(\mathbb{R}^{+})^{n}$ given by
\begin{equation}\label{distri eig LUE}
\frac{1}{n! Z_{n,\alpha}}\prod_{1\leq i < j \leq n}(\lambda_{j}-\lambda_{i})^{2} \prod_{j=1}^{n}\lambda_{j}^{\alpha}e^{-\lambda_{j}}d\lambda_{j},
\end{equation}
where $Z_{n,\alpha}$ is the partition function.  It is well-known \cite{Deift} that \eqref{distri eig LUE} is a determinantal point process whose correlation kernel is
\begin{equation}
K_{n}^{\mathrm{LUE}}(\lambda,\nu) = \sqrt{w(\lambda)w(\nu)}\sum_{j=0}^{n-1} p_{j}(\lambda)p_{j}(\nu), \qquad \lambda, \nu > 0,
\end{equation}
where $w(x) = x^{\alpha}e^{-x}$ and $p_{j}(x)$ is the Laguerre orthonormal polynomial of degree $j$, i.e. it satisfies
\begin{equation}
\int_{0}^{\infty} p_{j}(x)p_{\ell}(x)w(x)dx = \delta_{j\ell}, \qquad \mbox{ for } \ell = 0,1,2,...,j.
\end{equation}
Near the hard edge, the LUE kernel converges to the Bessel kernel as $n \to \infty$. More precisely, after the rescaling 
\begin{equation}\label{rescaling LUE}
x_{j} = 4 n \lambda_{j} \qquad \mbox{ for }j = 1,...,n,
\end{equation}
the following limit holds
\begin{equation}\label{convergence of LUE kernel}
\lim_{n\to \infty} \frac{1}{4n} K_{n}^{\mathrm{LUE}}\left( \frac{x}{4n},\frac{y}{4n} \right) = K^{\mathrm{Be}}(x,y).
\end{equation}
This limit implies also trace-norm convergence of the associated operator when acting on bounded intervals. Therefore, after the proper rescaling between $\vec{\lambda}$ and $\vec{x}$ given by \eqref{rescaling LUE}, we have
\begin{equation}\label{convergence generating Laguerre}
F_{n}^{\mathrm{LUE}}(\vec{\lambda},\vec{s}) := \det \Bigg( 1- \chi_{(0,\lambda_{k})}\sum_{j=1}^{k}(1-s_{j})\mathcal{K}_{n}^{\mathrm{LUE}}\chi_{(\lambda_{j-1},\lambda_{j})} \Bigg) = F(\vec{x},\vec{s}) + o(1)
\end{equation}
as $n \to \infty$, and where $\lambda_{0} := 0$ and $\mathcal{K}_{n}^{\mathrm{LUE}}$ is the integral operator whose kernel is $K_{n}^{\mathrm{LUE}}$. On the other hand, $F_{n}^{\mathrm{LUE}}(\vec{\lambda},\vec{s})$ can also be written as the following ratio of Hankel determinants
\begin{equation}\label{ratio Hankel determinants}
F_{n}^{\mathrm{LUE}}(\vec{\lambda},\vec{s}) = \frac{\displaystyle \det \Bigg( \int_{0}^{\infty} w(x)\Bigg(1-\sum_{j=1}^{k}(1-s_{j})\chi_{(\lambda_{j-1},\lambda_{j})}(x)\Bigg) x^{i+j-2}dx \Bigg)_{i,j=1}^{n}}{\displaystyle \det \left( \int_{0}^{\infty} w(x)x^{i+j-2}dx \right)_{i,j=1}^{n}},
\end{equation}
where the denominator of the above expression is the partition function of the LUE and is well-known (see \cite[formula 17.6.5]{Mehta}). In particular, Theorem \ref{thm: main thm} together with \eqref{convergence generating Laguerre} implies large $n$ asymptotics for the ratio \eqref{ratio Hankel determinants} up to constant term, but this does not provide an estimate for error term $o(1)$ in \eqref{convergence generating Laguerre}. 

\subsection{Ratio probability between the two smallest particles}
Two quantities of interest are the ratio and the gap probabilities between the two smallest particles in the Bessel point process, namely $Q_{\alpha}(r) = \mathbb{P}\left( \frac{\zeta_{2}}{\zeta_{1}}>r \right)$ and $G_{\alpha}(d) = \mathbb{P}\left( \zeta_{2}-\zeta_{1} > d \right)$, where $r>1$ is the size of the ratio and $d>0$ is the size of the gap. The ratio probability was obtained in \cite{AtkChaZoh} and the gap probability in \cite{ForWit}. Note that Theorem \ref{thm: main thm} expresses quantities related to ratios of particles  more naturally than quantities related to differences of particles. Indeed, if we choose $\vec{r} = (1,r_{2},...,r_{k})$ (i.e. $r_{1} = 1$) in \eqref{eq: tracy widom type general k}, the numbers $r_{2}$,...,$r_{k}$ are related to the ratios $\frac{\zeta_{2}}{\zeta_{1}}$,...$\frac{\zeta_{k}}{\zeta_{1}}$. In this section, we start by expressing $Q_{\alpha}(r)$ in terms of $F$. By definition, we have
\begin{equation}
\begin{array}{r c l}
\displaystyle Q_{\alpha}(r) & = & \displaystyle \int_{0}^{\infty} \partial_{\xi} \mathbb{P}\left.\left( \zeta_{1} \leq \xi  \cap \zeta_{2}>rx \right)\right|_{\xi = x}dx, \\[0.3cm]
& = & \displaystyle \int_{0}^{\infty} \partial_{\xi} \left. \mathbb{P}\left( n_{(0,\xi)}=1 \cap n_{(\xi,rx)} = 0 \right)\right|_{\xi = x}dx.
\end{array}
\end{equation}
The probability in the integrand can be obtained from the generating function \eqref{def of F} as follows
\begin{equation}
\partial_{s} \left.F((\xi,rx),(s,0))\right|_{s=0} = \mathbb{P}\left( n_{(0,\xi)}=1 \cap n_{(\xi,rx)} = 0 \right),
\end{equation}
and thus
\begin{equation*}
\hspace{-0.1cm}\begin{array}{r c l}
\displaystyle Q_{\alpha}(r) & = & \displaystyle \int_{0}^{\infty} \partial_{\xi} \left. \partial_{s} \left.F((\xi,rx),(s,0))\right|_{s=0}\right|_{\xi = x}dx \\[0.3cm]
& = & \displaystyle \int_{0}^{\infty}  \partial_{r_{1}} \left. \partial_{s} \left.F((r_{1}x,rx),(s,0))\right|_{s=0}\right|_{r_{1} = 1}\frac{dx}{x}, \\
& = & \displaystyle \int_{0}^{\infty}  \partial_{r_{1}} \left. \partial_{s} \left. \exp \left( -\frac{1}{4}\int_{0}^{x}\left[ r_{1}q_{1}^{2}(\xi)+rq_{2}^{2}(\xi) \right] \log \left( \frac{x}{\xi} \right)d\xi \right)\right|_{s=0}\right|_{r_{1} = 1}\frac{dx}{x},
\end{array}
\end{equation*}
where we have applied Theorem \ref{thm: main thm} with $k=2$, $\vec{r} = (r_{1},r)$, $\vec{s}=(s,0)$. It is worth comparing this formula with the result obtained in \cite[Theorem 1.7]{AtkChaZoh}, which is given by
\begin{equation}
Q_{\alpha}(r) = \frac{1}{4^{\alpha+1}\Gamma(1+\alpha)\Gamma(2+\alpha)}\int_{0}^{\infty} x^{\alpha} e^{I(x;r)}dx,
\end{equation}
where 
\begin{equation}\label{def of I}
I(x;r) = \displaystyle -\frac{1}{4} \int_{0}^{x} (\widetilde{q}_{1}^{2}(\xi;r)+r\widetilde{q}_{2}^{2}(\xi;r))\log \left( \frac{x}{\xi}\right) d\xi.
\end{equation}
The functions $\widetilde{q}_{1}^{2}(\xi;r)$ and $\widetilde{q}_{2}^{2}(\xi;r)$ are real and analytic for $\xi \in (0,\infty)$ and $r \in (1,\infty)$, and they satisfy the following system of two coupled Painlev\'{e} V equations:
\begin{align}
& \hspace{-0.95cm} \xi \widetilde{q}_{1}\bigg(1 \hspace{-0.08cm} -\hspace{-0.08cm} \sum_{j=1}^{2}\widetilde{q}_{j}^{2}\bigg) \sum_{j=1}^{2}(\xi \widetilde{q}_{j}\widetilde{q}_{j}^{\prime})^{\prime} + \bigg[ \xi \left( (\xi \widetilde{q}_{1}^{\prime})^{\prime} + \frac{\widetilde{q}_{1}}{4}\right) + \frac{1}{\widetilde{q}_{1}^{3}} \bigg]\bigg(1\hspace{-0.08cm}-\hspace{-0.08cm}\sum_{j=1}^{2}\widetilde{q}_{j}^{2}\bigg)^{\hspace{-0.08cm}2} \hspace{-0.08cm}+ \xi^{2}\widetilde{q}_{1}\bigg( \sum_{j=1}^{2}\widetilde{q}_{j}\widetilde{q}_{j}^{\prime} \bigg)^{2} \hspace{-0.08cm} = \frac{\alpha^{2} \widetilde{q}_{1}}{4}, \nonumber \\
& \hspace{-0.95cm} \xi \widetilde{q}_{2}\bigg(1\hspace{-0.08cm} -\hspace{-0.08cm} \sum_{j=1}^{2}\widetilde{q}_{j}^{2}\bigg) \sum_{j=1}^{2}(\xi \widetilde{q}_{j}\widetilde{q}_{j}^{\prime})^{\prime} + \xi \left( (\xi \widetilde{q}_{2}^{\prime})^{\prime} + \frac{r\widetilde{q}_{2}}{4}\right) \bigg(1\hspace{-0.08cm}-\hspace{-0.08cm}\sum_{j=1}^{2}\widetilde{q}_{j}^{2}\bigg)^{\hspace{-0.08cm}2} \hspace{-0.08cm} + \xi^{2}\widetilde{q}_{2}\bigg( \sum_{j=1}^{2}\widetilde{q}_{j}\widetilde{q}_{j}^{\prime} \bigg)^{2} \hspace{-0.08cm} = \frac{\alpha^{2} \widetilde{q}_{2}}{4}, \label{system for q_1 and q_2 tilde}
\end{align}
where primes denote derivatives with respect to  $\xi$. Furthermore, the functions $\widetilde{q}_{1}$ and $\widetilde{q}_{2}$ satisfy the following boundary conditions: as $\xi \to 0$, we have
\begin{align}
& \widetilde{q}_{1}(\xi) = \sqrt{\frac{2}{\alpha+2}}(1+\bigO(\xi)), \label{small x asymp q_1 tilde} \\
& \widetilde{q}_{2}(\xi) = (1-r^{-1})J_{\alpha+2}(\sqrt{r\xi})(1+\bigO(\xi)) = \frac{(1-r^{-1})(r\xi)^{\frac{\alpha+2}{2}}}{2^{\alpha+2}\Gamma(\alpha+3)}(1+\bigO(\xi)). \nonumber
\end{align}
The authors obtained also other asymptotics for $\widetilde{q}_{1}(\xi;r)$ and $\widetilde{q}_{2}(\xi;r)$ in various regimes of $r$ and $x$ (see \cite[Theorem 1.1]{AtkChaZoh} for more details). The main differences between the system \eqref{system for q_1 and q_2 tilde} for $\widetilde{q}_{1}$ and $\widetilde{q}_{2}$ with the system for $q_{1}$ and $q_{2}$ lie in the extra term $\frac{1}{\widetilde{q}_{1}^{3}}$ in the first equation of \eqref{system for q_1 and q_2 tilde}, as well as the small $\xi$ asymptotics of $\widetilde{q}_{1}(\xi)$, see \eqref{small x asymp q_1 tilde}.


\section{Model RH problem}
\label{sec: model rh problem}

In order to have compact notations in the coming sections, we define
\begin{equation}
\sigma_{3} = \begin{pmatrix}
1 & 0 \\ 0 & -1
\end{pmatrix}, \qquad N = \frac{1}{\sqrt{2}}\begin{pmatrix}
1 & i \\ i & 1
\end{pmatrix}.
\end{equation}
We also define for $y \in \mathbb{R}$ the following piecewise constant matrix:
\begin{equation}\label{def of H}
H_{y}(z) = \left\{  \begin{array}{l l}

I, & \mbox{ for } -\frac{2\pi}{3}< \arg(z-y)< \frac{2\pi}{3},\\

\begin{pmatrix}
1 & 0 \\
-e^{\pi i \alpha} & 1 \\
\end{pmatrix}, & \mbox{ for } \frac{2\pi}{3}< \arg(z-y)< \pi, \\

\begin{pmatrix}
1 & 0 \\
e^{-\pi i \alpha} & 1 \\
\end{pmatrix}, & \mbox{ for } -\pi< \arg(z-y)< -\frac{2\pi}{3}, \\

\end{array} \right.
\end{equation}
where the principal branch is chosen for the argument, such that $\arg(z-y) = 0$ for $z>y$.

\vspace{0.2cm}\hspace{-0.7cm} Let $0=x_{0}<x_{1}<...<x_{k}<+\infty$ and $s_{1},...,s_{k} \in [0,1]$, $s_{k+1} = 1$ be such that $s_{j+1} \neq s_{j}$ for $j \in \{1,...,k\}$. The solution of our model RH problem will be denoted by $\Phi(z;\vec{x},\vec{s})$, where $\vec{x} = (x_{1},...,x_{k})$ and $\vec{s} = (s_{1},...,s_{k})$. When there is no confusion,  we will just denote it by $\Phi(z)$ where the dependence in $\vec{x}$ and $\vec{s}$ is omitted.

\begin{figure}[t]
    \begin{center}
    \setlength{\unitlength}{1truemm}
    \begin{picture}(100,55)(20,10)
        \put(50,40){\line(1,0){60}}
        \put(50,40){\line(-1,0){30}}
        \put(50,40){\thicklines\circle*{1.2}}
        \put(72,40){\thicklines\circle*{1.2}}
        \put(87,40){\thicklines\circle*{1.2}}
        \put(100,40){\thicklines\vector(1,0){.0001}}
        \put(83,40){\thicklines\vector(1,0){.0001}}
        \put(110,40){\thicklines\circle*{1.2}}
        \put(50,40){\line(-0.5,0.866){15}}
        \put(50,40){\line(-0.5,-0.866){15}}
        \qbezier(53,40)(52,43)(48.5,42.598)
        \put(53,43){$\frac{2\pi}{3}$}
        \put(50.3,36.8){$-x_{3}$}
        \put(69,36.8){$-x_{2}$}
        \put(84.2,36.8){$-x_{1}$}
        \put(110.3,36.8){$0$}
        \put(40,60){$\Sigma_{1}$}
        \put(40,18){$\Sigma_{2}$}
        \put(62,52){$\mathcal{I}_{1}$}
        \put(27,52){$\mathcal{I}_{2}$}
        \put(27,30){$\mathcal{I}_{3}$}
        \put(62,30){$\mathcal{I}_{4}$}
        \put(65,39.9){\thicklines\vector(1,0){.0001}}
        \put(35,39.9){\thicklines\vector(1,0){.0001}}
        \put(41,55.588){\thicklines\vector(0.5,-0.866){.0001}}
        \put(41,24.412){\thicklines\vector(0.5,0.866){.0001}}
    \end{picture}
    \caption{\label{figPhi}The jump contour for $\Phi$ with $k=3$, and the four sectors $\mathcal{I}_{i}$, $i = 1,2,3,4$.}
\end{center}
\end{figure}
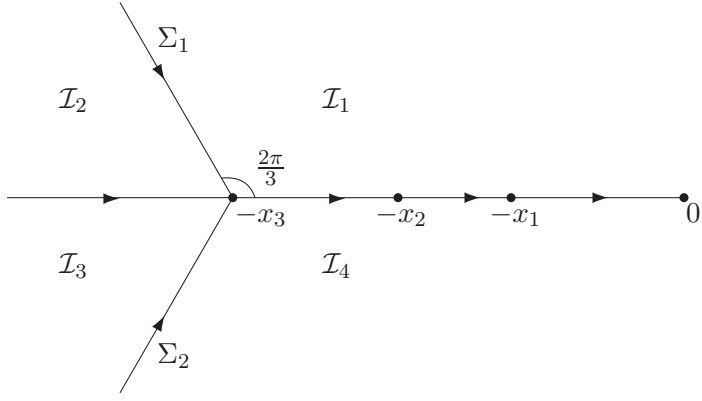

\subsubsection*{RH problem for $\Phi$}
\begin{itemize}
\item[(a)] $\Phi : \mathbb{C}\setminus \Sigma_{\Phi} \to \mathbb{C}^{2\times 2}$ is analytic, where the contour $\Sigma_{\Phi} = ((-\infty,0]\cup \Sigma_{1} \cup \Sigma_{2})$ is oriented as shown in Figure \ref{figPhi} with
\begin{equation*}
\Sigma_{1} = -x_{k}+ e^{\frac{2\pi i}{3}}\mathbb{R}^{+}, \qquad \Sigma_{2} = -x_{k}+ e^{-\frac{2\pi i}{3}}\mathbb{R}^{+}.
\end{equation*}
\item[(b)] The limits of $\Phi(z)$ as $z$ approaches $\Sigma_{\Phi}\setminus \{0,-x_{1},...,-x_{k}\}$ from the left ($+$ side) and from the right ($-$ side) exist, are continuous on $\Sigma_{\Phi}\setminus \{0,-x_{1},...,-x_{k}\}$ and are denoted by $\Phi_+$ and $\Phi_-$ respectively. Furthermore they are related by:
\begin{align}
& \Phi_{+}(z) = \Phi_{-}(z) \begin{pmatrix}
1 & 0 \\ e^{\pi i \alpha} & 1
\end{pmatrix}, & & z \in \Sigma_{1}, \\
& \Phi_{+}(z) = \Phi_{-}(z) \begin{pmatrix}
0 & 1 \\ -1 & 0
\end{pmatrix}, & & z \in (-\infty,-x_{k}), \\
& \Phi_{+}(z) = \Phi_{-}(z) \begin{pmatrix}
1 & 0 \\ e^{-\pi i \alpha} & 1
\end{pmatrix}, & & z \in \Sigma_{2}, \\
& \Phi_{+}(z) = \Phi_{-}(z) \begin{pmatrix}
e^{\pi i \alpha} & s_{j} \\ 0 & e^{-\pi i \alpha}
\end{pmatrix}, & & z \in (-x_{j},-x_{j-1}),
\end{align}
where $j =1,...,k$.
\item[(c)] As $z \to \infty$, we have 
\begin{equation}\label{Phi inf}
\Phi(z) = \Big(I + \Phi_{1}(\vec{x},\vec{s})z^{-1}+\bigO(z^{-2})\Big)z^{-\frac{\sigma_{3}}{4}}Ne^{z^{\frac{1}{2}}\sigma_{3}},
\end{equation}
where the principal branch is chosen for each root, and $\Phi_{1}$ is given by
\begin{equation}\label{def of Phi_1}
\Phi_{1}(\vec{x},\vec{s}) = \begin{pmatrix}
v(\vec{x},\vec{s}) & -it(\vec{x},\vec{s}) \\ ip(\vec{x},\vec{s}) & -v(\vec{x},\vec{s})
\end{pmatrix}.
\end{equation}
The fact that $\Phi_{1}$ is traceless follows directly from the relation $\det \Phi \equiv 1$.

As $z$ tends to $-x_{j}$, $j \in \{1,...,k\}$, $\Phi$ takes the form
\begin{equation}\label{Phi x_j}
\Phi(z) = \Phi_{0,j}(z) \begin{pmatrix}
1 & \frac{s_{j+1}-s_{j}}{2\pi i} \log (z+x_{j}) \\ 0 & 1
\end{pmatrix} V_{j}(z) e^{\frac{\pi i \alpha}{2}\theta(z)\sigma_{3}}H_{-x_{k}}(z),
\end{equation}
where $\Phi_{0,j}(z) = \Phi_{0,j}(z;\vec{r},\vec{s})$ is analytic in a neighbourhood of $(-x_{j+1},-x_{j-1})$, satisfies $\det \Phi_{0,j} \equiv 1$, and $\theta(z)$, $V_{j}(z)$ are piecewise constant and defined by
\begin{align}
& \theta(z) = \left\{  \begin{array}{l l}
+1, & \mbox{Im } z > 0, \\
-1, & \mbox{Im } z < 0,
\end{array} \right. & & V_{j}(z) = \left\{  \begin{array}{l l}
I, & \mbox{Im } z > 0, \\
\begin{pmatrix}
1 & -s_{j} \\ 0 & 1
\end{pmatrix}, & \mbox{Im } z < 0.
\end{array} \right.
\end{align}

As $z$ tends to $0$, the behaviour of $\Phi$ is
\begin{equation}\label{Phi 0}
\Phi(z) = \Phi_{0,0}(z)z^{\frac{\alpha}{2}\sigma_{3}} \begin{pmatrix}
1 & s_{1}h(z) \\ 0 & 1
\end{pmatrix}, \qquad \alpha > -1,
\end{equation}
where $\Phi_{0,0}(z)$ is analytic in a neighbourhood of $(-x_{1},\infty)$, satisfies $\det \Phi_{0,0} \equiv 1$ and 
\begin{equation}\label{def of h}
h(z) = \left\{ \begin{array}{l l}
\displaystyle \frac{1}{2 i \sin(\pi \alpha)}, & \alpha \notin \mathbb{N}, \\[0.35cm]
\displaystyle \frac{(-1)^{\alpha}}{2\pi i} \log z, & \alpha \in \mathbb{N}.
\end{array} \right.
\end{equation}
\end{itemize}
\begin{remark}\label{remark: uniqueness}
The solution of the RH problem for $\Phi$ is unique. This follows by standard arguments, based on the fact that $\det \Phi(z) \equiv 1$, see e.g. \cite[Theorem 7.18]{Deift}. We will prove the existence of the solution in Section \ref{Section: proof of main thm}, see in particular \eqref{Phi in terms of Y} and comments below.
\end{remark}
\begin{remark}
We can verify that $\sigma_{3} \overline{\Phi(\overline{z})}\sigma_{3}$ is also a solution of the RH problem for $\Phi$. Thus, by uniqueness of the solution (see Remark \ref{remark: uniqueness}), we have
\begin{equation}\label{symmetry of the RH problem}
\Phi(z) = \sigma_{3} \overline{\Phi(\overline{z})}\sigma_{3}.
\end{equation}
This means that there is some symmetry in the problem. In particular, this relation implies that the functions $v$, $t$ and $p$ that appear in \eqref{def of Phi_1} are real.
\end{remark}

\subsection*{Lax pair}
In this subsection, we obtain a system of $k$ ordinary differential equations for $k$ functions associated to $\Phi$. We derive these equations using Lax pair techniques. The following computations are similar to those done in \cite{AtkChaZoh} for the distribution of the ratio between the two smallest eigenvalues in the Laguerre Unitary Ensemble. We introduce a new parameter $x>0$, and we begin with the following transformation on $\Phi$:
\begin{equation}\label{def of Phi tilde}
\widetilde \Phi(z;x) = \widetilde E(x) \Phi(x^{2}z;\vec{r}x^{2},\vec{s}), \qquad \widetilde E(x) = \begin{pmatrix}
1 & 0 \\ \frac{t(\vec{r}x^{2},\vec{s})}{x} & 1
\end{pmatrix}e^{\frac{\pi i}{4}\sigma_{3}}x^{\frac{\sigma_{3}}{2}},
\end{equation}
where we have omitted the dependence of $\widetilde \Phi$ in $\vec{r}$ and $\vec{s}$. Note that with this transformation, $\widetilde{\Phi}$ satisfies an RH problem whose contour does not depend on $x$. By standard arguments, $\widetilde \Phi(z;x)$ is analytic in $x$ for $x$ in a compact subset of $(0,\infty)$. By differentiating $\widetilde \Phi$ with respect to $z$ and $x$, we obtain a Lax pair of the form
\begin{equation}\label{Lax pair}
\left\{ \begin{array}{l}
\displaystyle \partial_{z} \widetilde \Phi(z;x) = A(z;x)\widetilde \Phi(z;x), \\
\displaystyle \partial_{x} \widetilde \Phi(z;x) = B(z;x)\widetilde \Phi(z;x),
\end{array} \right.
\end{equation}
where we have also omitted the dependence of $A$ and $B$ in $\vec{r}$ and $\vec{s}$. Since $\widetilde \Phi$, $\partial_{z} \widetilde \Phi(z;x)$ and $\partial_{x} \widetilde \Phi(z;x)$ have the same jumps, $A$ and $B$ are meromorphic in $z \in \mathbb{C}$. From \eqref{Phi x_j} and \eqref{Phi 0}, $B$ is an entire function in $z$ and $A$ has simple poles in $z$ at $0$, $-r_{1}$,...,$-r_{k}$. We can use \eqref{Phi inf} to obtain an explicit expression for $B$:
\begin{equation}
B(z;x) = B_{0}(x) + z B_{1}, \qquad B_{0}(x) =  \begin{pmatrix}
0 & 1 \\ u(x) & 0
\end{pmatrix}, \quad B_{1} = \begin{pmatrix}
0 & 0 \\ 1 & 0
\end{pmatrix}.
\end{equation}
where $u(x) = \frac{2t^{\prime}(\vec{r}x^{2},\vec{s})x^{2}+t(\vec{r}x^{2},\vec{s})^{2}-2v(\vec{r}x^{2},\vec{s})-t(\vec{r}x^{2},\vec{s})}{x^{2}}$, and $t^{\prime}(\vec{r}x^{2}) = \left.\partial_{y}t(\vec{r}y)\right|_{y=x^{2}}$. On the other hand, $A$ can be written as
\begin{equation}\label{meromorphic A}
A(z;x) = A_{\infty}(x) + \sum_{j=0}^{k} \frac{A_{j}(x)}{z+r_{j}}.
\end{equation}
The matrix $A_{\infty}$ can also be explicitly evaluated by using \eqref{Phi inf}, we have
\begin{equation}\label{A inf}
A_{\infty}(x) = \begin{pmatrix}
0 & 0 \\ \frac{x}{2} & 0
\end{pmatrix}.
\end{equation}
Since $\det \widetilde \Phi (z)$ is constant, $A$ is traceless and we can also write
\begin{equation}\label{def of b_j}
A(z;x) = \begin{pmatrix}
a(z;x) & b(z;x) \\ c(z;x) & -a(z;x)
\end{pmatrix}, \qquad b(z;x) = \sum_{j=0}^{k}\frac{b_{j}(x)}{z+r_{j}}.
\end{equation}
We will derive a system of ordinary differential equations for $b_{0}(x)$, $b_{1}(x),..., b_{k}(x)$ and $u(x)$ from the compatibility condition 
\begin{equation}
\partial_{x}\partial_{z} \widetilde \Phi(z;x) = \partial_{z}\partial_{x} \widetilde \Phi(z;x),
\end{equation}
which by using \eqref{Lax pair} is equivalent to 
\begin{equation}
\partial_{x} A - \partial_{z}B + AB-BA = 0.
\end{equation}
This condition gives rise to the three following equations for $a$, $b$, $c$, and $u$:
\begin{align}
& 0 = c - b(z+u)-a^{\prime}, \label{aa6}\\
& 0 = 2a +b^{\prime}, \label{aa7}\\
& 0 = 2a(z+u) -c^{\prime} + 1,
\end{align}
where primes denote derivatives with respect to $x$. In particular $a$ and $c$ can be expressed in terms of $b$. Thus we can write the determinant of $A$ as
\begin{equation}\label{det A in terms of b}
\det A = -b^{2} (z+u) + \frac{(b^{2})^{\prime\prime}}{4} - \frac{3}{4} (b^{\prime})^{2}.
\end{equation}
Expanding $\det A (z)$ around $z = 0$, $-r_{1}$,...,$-r_{k}$ and $\infty$ using on one hand \eqref{def of b_j} and \eqref{det A in terms of b}, and on the other hand \eqref{Phi inf}, \eqref{Phi x_j} and \eqref{Phi 0}, and by expanding $A_{12}(z) = b(z)$ around $z = \infty$, we obtain
\begin{align}
& \sum_{j=0}^{k} b_{j}(x) = \frac{x}{2}, \label{aa9}\\
& (u(x)-r_{j})b_{j}(x)^{2} +\frac{1}{4}b_{j}^{\prime}(x)^{2} - \frac{1}{2}b_{j}(x)b_{j}^{\prime\prime}(x) = 0, \qquad j = 1,...,k \label{aa10}\\
& u(x)b_{0}(x)^{2} +\frac{1}{4}b_{0}^{\prime}(x)^{2} - \frac{1}{2}b_{0}(x)b_{0}^{\prime\prime}(x) = \frac{\alpha^{2}}{4}. \label{aa11}
\end{align}
\begin{definition}
We define $q_{j}$ in terms of $b_{j}$ as follows:
\begin{equation}\label{def of q_j}
q_{j}^{2}(x) = \frac{2b_{j}(\sqrt{x})}{\sqrt{x}}, \qquad j = 1,...,k.
\end{equation}
We can use \eqref{aa9} and \eqref{aa11} to express $u$ and $b_{0}$ in terms of $b_{1}$,...,$b_{k}$, and therefore in terms of $q_{1}$,...,$q_{k}$. By substituting these expressions for $u$ and $b_{0}$ in \eqref{aa10}, we obtain the system of $k$ coupled Painlev\'{e} V equations given by \eqref{system of ODEs for the q_j}. Also, from \eqref{symmetry of the RH problem}, if $z \in \mathbb{R}\setminus \{-r_{k},...,-r_{1},0\}$, we have $b(z;x) = \overline{b(z;x)}$. This implies that $b_{0}$,...,$b_{k}$, and therefore $q_{1}^{2}$,...,$q_{k}^{2}$, are all real functions of $x \in \mathbb{R}^{+}$.
\end{definition}
Proposition \ref{Prop: Phi in terms of q_j} below will be useful in Section \ref{Section: proof of main thm} to integrate the identity \eqref{diff identity main identity}.
\begin{proposition}\label{Prop: Phi in terms of q_j} For each $j = 1,2,\dots,k$,  there holds the relation
\begin{equation}
\partial_{x} \left( x \lim_{z \to -r_{j}x}[\Phi^{-1}(z;\vec{r}x,\vec{s})\Phi^{\prime}(z;\vec{r}x,\vec{s})]_{21} \right) =  \frac{2\pi i e^{-\pi i \alpha}}{s_{j+1}-s_{j}} \frac{q_{j}^{2}(x)}{4},
\end{equation}
where the limit is taken from $z \in \mathcal{I}_{4}$, with $\mathcal{I}_{4}$ as shown in Figure \ref{figPhi}, and where $\Phi^{\prime} = \partial_{z}\Phi$.
\end{proposition}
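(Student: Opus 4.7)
My plan is to reduce the identity to a direct Lax-pair computation by (i) extracting the analytic prefactor $\Phi_{0,j}$ from the local expansion \eqref{Phi x_j} to show the left-hand side limit exists, (ii) using the rescaling \eqref{def of Phi tilde} to pass from $\Phi(\cdot;\vec{r}x,\vec{s})$ to the Lax pair variable $y=\sqrt{x}$, and (iii) applying $\partial_y\widetilde\Phi = B\widetilde\Phi$ to compute the $y$-derivative.

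For step (i), near $z=-r_jx$ write $\Phi = \Phi_{0,j}\,\tilde{L}_j\,V_j\,T\,H$, where only $\tilde{L}_j(z) = I + \frac{s_{j+1}-s_j}{2\pi i}\log(z+r_jx)\,E_{12}$ carries the logarithm and $E_{12}$ denotes the matrix unit at position $(1,2)$. Expanding $\Phi^{-1}\Phi'$ and isolating the $(2,1)$ entry, I would observe three things: $\tilde{L}_j$ and $V_j$ are of the form $I+c\,E_{12}$, so conjugation by them fixes the $(2,1)$ entry; the extra term $\tilde{L}_j^{-1}\tilde{L}_j'$ is proportional to $E_{12}$ and has no $(2,1)$ component; and conjugation by the diagonal $T = e^{\pi i\alpha\theta\sigma_3/2}$ multiplies the $(2,1)$ entry by $e^{\pi i\alpha\theta}$. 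In sector $\mathcal{I}_4$ (where $\theta=-1$ and $H=I$) the log cancels and I obtain
\begin{equation*}
\lim_{z\to -r_jx,\,z\in\mathcal{I}_4}\,[\Phi^{-1}(z;\vec{r}x,\vec{s})\Phi'(z;\vec{r}x,\vec{s})]_{21}
= e^{-\pi i\alpha}\,[\Phi_{0,j}^{-1}\Phi_{0,j}'(-r_jx;\vec{r}x,\vec{s})]_{21}.
\end{equation*}

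For step (ii), setting $y=\sqrt{x}$ and matching local expansions at $z=-r_j$ of $\widetilde\Phi(z;y) = \widetilde{E}(y)\Phi(y^2z;\vec{r}y^2,\vec{s})$, the identity $\log(y^2(z+r_j)) = \log y^2 + \log(z+r_j)$ forces
\begin{equation*}
\widetilde\Phi_{0,j}(z;y) = \widetilde{E}(y)\,\Phi_{0,j}(y^2z;\vec{r}y^2,\vec{s})\,K(y^2),\qquad K(y^2) := I + \tfrac{s_{j+1}-s_j}{2\pi i}\log(y^2)\,E_{12}.
\end{equation*}
Because $\widetilde{E}(y)$ is lower triangular and $K(y^2)$ is of the form $I+c\,E_{12}$, conjugation by either fixes the $(2,1)$ entry, and a short chain-rule calculation gives
\begin{equation*}
x\,[\Phi_{0,j}^{-1}\Phi_{0,j}'(-r_jx;\vec{r}x,\vec{s})]_{21} = [P_0^{-1}(y)P_1(y)]_{21},
\end{equation*}
with $P_0(y) := \widetilde\Phi_{0,j}(-r_j;y)$ and $P_1(y) := \partial_z\widetilde\Phi_{0,j}(z;y)|_{z=-r_j}$.

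For step (iii), since $\tilde{L}_j, V_j, T, H$ are $y$-independent, $\partial_y\widetilde\Phi_{0,j} = B\widetilde\Phi_{0,j}$. Evaluating at $z=-r_j$ and differentiating once in $z$ gives $\partial_y P_0 = (B_0 - r_jB_1)P_0$ and $\partial_y P_1 = B_1P_0 + (B_0 - r_jB_1)P_1$. Substituting into $\partial_y(P_0^{-1}P_1) = -P_0^{-1}(\partial_yP_0)P_0^{-1}P_1 + P_0^{-1}\partial_yP_1$, the $(B_0 - r_jB_1)$ terms cancel and I am left with $\partial_y(P_0^{-1}P_1) = P_0^{-1}B_1P_0$. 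Since $B_1 = \bigl(\begin{smallmatrix}0&0\\1&0\end{smallmatrix}\bigr)$ and $\det P_0\equiv 1$, a direct expansion yields $[P_0^{-1}B_1P_0]_{21} = (P_0)_{11}^2$. Combining this with the residue formula $b_j(y) = \frac{s_{j+1}-s_j}{2\pi i}(P_0)_{11}^2(y)$ (which is $\mathrm{Res}_{z=-r_j}A_{12}$ extracted from $A = \widetilde\Phi_{0,j}'\widetilde\Phi_{0,j}^{-1} + \frac{s_{j+1}-s_j}{2\pi i(z+r_j)}\widetilde\Phi_{0,j}E_{12}\widetilde\Phi_{0,j}^{-1}$), the definition $q_j^2(x) = 2b_j(\sqrt{x})/\sqrt{x}$, and $\partial_x = \frac{1}{2\sqrt{x}}\partial_y$, all factors of $\sqrt{x}$, $i$, and $e^{-\pi i\alpha}$ balance to give the claimed identity. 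The main obstacle is the bookkeeping in step (ii): the logarithmic shift $\log y^2$ must be exactly absorbed by the correction matrix $K(y^2)$, and the precise factors of $\sqrt{x}$ and $e^{\pm\pi i/4}$ carried by $\widetilde{E}(y)$ must be tracked so that the prefactor $\frac{2\pi i e^{-\pi i\alpha}}{s_{j+1}-s_j}$ on the right-hand side emerges without drift.
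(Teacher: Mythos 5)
Your proof is correct and follows essentially the same route as the paper's: extract the analytic prefactor to express the $(2,1)$ limit as $e^{-\pi i\alpha}[\Phi_{0,j}^{-1}\Phi_{0,j}']_{21}$, rescale to the Lax-pair variable, differentiate the local expansion at $z=-r_j$ using $\partial_y\widetilde\Phi = B\widetilde\Phi$, and combine with the residue of $A_{12}$ and $\det=1$. The only difference is presentational: by absorbing $\widetilde E(y)$ and the log-shift matrix $K(y^2)$ into $\widetilde\Phi_{0,j}$ from the outset, you compress the paper's pair of equations (for $\widetilde E E_j$ and $x^2F_j$) into the single clean identity $\partial_y(P_0^{-1}P_1) = P_0^{-1}B_1P_0$, which is a tidier way of organizing the same computation.
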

\begin{proof}
We recall that $\Phi_{0,j}(z;\vec{r}x,\vec{s})$ defined in \eqref{Phi x_j} is invertible and analytic in $z$ in a neighbourhood of $-r_{j}x$. 
By expanding it around $-r_{j}x$, we can write
\begin{equation}
\Phi_{0,j}(z;\vec{r}x,\vec{s}) = E_{j}(x)\big(I+F_{j}(x)(z+r_{j}x) + \bigO((z+r_{j}x)^{2})\big), \qquad \mbox{ as } z \to -r_{j}x,
\end{equation}
for certain matrices $E_{j}$ and $F_{j}$ (they depend also on $\vec{r}$ and $\vec{s}$). Therefore, we have
\begin{equation}\label{aa2}
\begin{array}{r c l}
\displaystyle \lim_{z \to -r_{j}x}[\Phi^{-1}(z;\vec{r}x,\vec{s})\Phi^{\prime}(z;\vec{r}x,\vec{s})]_{21} & = & \displaystyle e^{-\pi i \alpha}[\Phi_{0,j}^{-1}(-r_{j}x)\Phi_{0,j}^{\prime}(-r_{j}x)]_{21}, \\
& = & \displaystyle e^{-\pi i \alpha} F_{j,21}(x),
\end{array}
\end{equation}
where the limit is taken from $z \in \mathcal{I}_{4}$.
On the other hand, taking the limit $z\to -r_{j}$ in the $B$-equation in the Lax pair \eqref{Lax pair} leads to
\begin{align}
& \partial_{x} (\widetilde E(x) E_{j}(x^{2})) = (B_{0}(x)-r_{j}B_{1})\widetilde E(x) E_{j}(x^{2}) - \widetilde E(x)E_{j}(x^{2})K(x), \\
& \partial_{x} (x^{2}F_{j}(x^{2})) = (\widetilde E (x) E_{j}(x^{2}))^{-1}B_{1}\widetilde E(x) E_{j}(x^{2}) + x^{2} [K(x),F_{j}(x^{2})], \label{lol 1}
\end{align}
where $K(x) = \begin{pmatrix}
0 & \frac{s_{j+1}-s_{j}}{\pi i x}  \\ 0 & 0
\end{pmatrix}$. In particular, taking the $(2,1)$ entry in \eqref{lol 1} and using the fact that $\det E_{j}(x) = 1$ leads to
\begin{equation}
\partial_{x} (x^{2}F_{j,21}(x^{2})) = i x E_{j,11}(x^{2})^{2}.
\end{equation}
By the change of variables $x^{2} \mapsto x$, this can be rewritten as
\begin{equation}\label{aa3}
\partial_{x} (x F_{j,21}(x)) = \frac{i}{2} E_{j,11}(x)^{2}.
\end{equation}
We also have, by the $A$-equation in the Lax pair \eqref{Lax pair}, as $z \to -r_{j}$
\begin{equation}\label{aa1}
A(z;x) = \frac{s_{j+1}-s_{j}}{2\pi i(z+r_{j})} \widetilde E(x) \begin{pmatrix}
-E_{j,11}(x^{2})E_{j,21}(x^{2}) & E_{j,11}(x^{2})^{2} \\
-E_{j,21}(x^{2})^{2} & E_{j,11}(x^{2})E_{j,21}(x^{2})
\end{pmatrix}\widetilde E(x)^{-1} + \bigO(1).
\end{equation}
Equation \eqref{aa1} implies then
\begin{equation}
b_{j}(x) = \frac{s_{j+1}-s_{j}}{2\pi }x E_{j,11}(x^{2})^{2}.
\end{equation}
Thus by \eqref{aa2}, \eqref{aa3} and \eqref{def of q_j}, we obtain the claim.
\end{proof}

\section{Proof of Theorem \ref{thm: main thm}}
\label{Section: proof of main thm}

We start the proof of Theorem \ref{thm: main thm} by following a theory developed by Its, Izergin, Korepin and Slavnov \cite{IIKS}, which was also developed by Bertola and Cafasso in \cite{BertolaCafasso}, to express the quantities $\partial_{x_{j}} \log F(\vec{x},\vec{s})$, $j=1,...,k$ in terms of a RH problem related to an \emph{integrable kernel} $R$ (the solution of this RH problem will be denoted $Y$). Then, we will relate $Y$ to $\Phi$ and finally integrate these identities. Let $K:\mathbb{R}^{+}\times \mathbb{R}^{+} \to \mathbb{R}$ be given by
\begin{equation}
K(u, v) = \chi_{(0,x_k)}(u) \sum_{j = 1}^k (1 - s_j) K^{\textrm{Be}}(u, v) \chi_{(x_{j-1}, x_j)}(v), \qquad u,v >0.
\end{equation}
This is the kernel of a trace class integral operator $\mathcal{K}$ acting on $L^2(\R^+)$. The kernel $K$ is integrable in the sense of Its, Izergin, Korepin and Slavnov, i.e. it can be written in the form
\begin{equation}
K(u,v) = \frac{f^{T}(u)g(v)}{u-v}, \qquad f^{T}(u)g(u) = 0, \qquad u,v >0,
\end{equation}
where $f(u)$ and $g(v)$ are given by
\begin{equation*}
f(u) = \frac{1}{2}\begin{pmatrix}
\chi_{(0,x_{k})}(u) J_{\alpha}(\sqrt{u}) \\
\chi_{(0,x_{k})}(u) \sqrt{u} J_{\alpha}^{\prime}(\sqrt{u})
\end{pmatrix}, \quad g(v) = \begin{pmatrix}
\sum_{j=1}^{k}(1-s_{j}) \sqrt{v} J_{\alpha}^{\prime}(\sqrt{v})\chi_{(x_{j-1},x_{j})}(v) \\
-\sum_{j=1}^{k}(1-s_{j}) J_{\alpha}(\sqrt{v})\chi_{(x_{j-1},x_{j})}(v)
\end{pmatrix}.
\end{equation*}
Also, by using the connection formula $I_{\alpha}(e^{\frac{\pi i}{2}}\sqrt{u})=e^{\frac{\alpha \pi i}{2}}J_{\alpha}(\sqrt{u})$ for $u >0$ (see \cite[formula 10.27.6]{NIST}), $f(u)$ and $g(v)$ can be rewritten in terms of $\widetilde{P}_{\mathrm{Be}}$ (this is the solution of a modified Bessel model RH problem, and is defined in the Appendix, see \eqref{def of P_Be tilde}) as follows:
\begin{align}
& f(u) = \frac{e^{-\frac{\alpha \pi i}{2}}e^{\frac{\pi i}{4}}}{2\sqrt{\pi}} \chi_{(0,x_{k})}(u) \sigma_{3} \widetilde{P}_{\mathrm{Be},+}(-u) \begin{pmatrix}
1 \\ 0
\end{pmatrix}, & & \mbox{ for } u >0, \label{lol 10}\\
& g(v) = \frac{e^{-\frac{\alpha \pi i}{2}}e^{\frac{\pi i}{4}}}{\sqrt{\pi}}\sum_{j=1}^{k}(1-s_{j})\chi_{(x_{j-1},x_{j})}(v) \sigma_{3} \widetilde{P}_{\mathrm{Be},+}^{-1}(-v)^{T}\begin{pmatrix}
0 \\ 1
\end{pmatrix} & & \mbox{ for } v >0. \label{lol 11}
\end{align}
In the Bessel point process, for all bounded Borel set $B$ with non zero Lebesgue measure, we have $\mathbb{P}(n_B = 0) > 0$. Therefore, from \eqref{def of F} and \eqref{F Fredholm} we have $\det(1 - \mathcal{K}) > 0$ if $s_1,...,s_k \in [0,1]$. By standard properties of trace class operators (see e.g. \cite[page 1029]{Schwartz}), we have
\begin{equation}\label{lol 7}
\partial_{x_{j}} \log \det (1-\mathcal{K}) = - \mbox{Tr} \left( (1-\mathcal{K})^{-1}\partial_{x_{j}}\mathcal{K} \right), \qquad j =1,...,k.
\end{equation}
In our case, it can be rewritten more explicitly as
\begin{equation}\label{lol 8}
\begin{array}{r c l}
\mbox{Tr} \left( (1-\mathcal{K})^{-1}\partial_{x_{j}}\mathcal{K} \right) & = & \displaystyle (s_{j+1}-s_{j}) \mbox{Tr}\left( (1-\mathcal{K})^{-1}\mathcal{K}^{\mathrm{Be}}\delta_{x_{j}} \right) \\[0.1cm]
& = & \displaystyle \frac{s_{j+1}-s_{j}}{1-s_{j}} \lim_{u \nearrow x_{j}} [(1-K)^{-1}K](u,u) \\[0.1cm]
& = & \displaystyle \frac{s_{j+1}-s_{j}}{1-s_{j}} \lim_{u \nearrow x_{j}} R(u,u)
\end{array}
\end{equation}
where $R$ is the kernel for the resolvent operator $\mathcal{R}$ defined by
\begin{equation}
1 + \mathcal{R} = (1 - \mathcal{K})^{-1}.
\end{equation}
If $s_{j} = 1$, then we take the limit $z\searrow x_{j}$ instead, and the above formula is replaced by
\begin{equation}
\mbox{Tr} \left( (1-\mathcal{K})^{-1}\partial_{x_{j}}\mathcal{K} \right) = \frac{s_{j+1}-s_{j}}{1-s_{j+1}} \lim_{z\searrow x_{j}} R(z,z),
\end{equation}
which is well defined since $s_{j+1} \neq s_j$. Let us now define the matrix $Y$ by
\begin{equation}\label{def of Y}
Y(z) = I - \int_0^{x_k} \frac{F(\mu)g^T(\mu)}{\mu - z} d\mu, \qquad F(\mu) = \begin{pmatrix} (1-\mathcal{K})^{-1} f_1(\mu) \\ (1 - \mathcal{K})^{-1} f_2(\mu) \end{pmatrix}.
\end{equation}
The function $Y$ satisfies the following RH problem \cite{DeiftItsZhou}.
\subsubsection*{RH problem for Y}
\begin{itemize}
\item[(a)] $Y : \mathbb{C}\setminus [0,x_{k}] \to \mathbb{C}^{2\times 2}$ is analytic
\item[(b)] For $u \in (0,x_{k}) \setminus \{x_1,...,x_k\}$, the limits $\lim_{\epsilon \to 0_{+}} Y(u\pm i \epsilon)$ exist, are denoted $Y_{+}(u)$ and $Y_{-}(u)$ respectively, are continuous as functions of $u \in (0,x_{k})$, and satisfy furthermore the jump relation
\begin{equation}
Y_{+}(u) = Y_{-}(u)J_{Y}(u), \qquad J_{Y}(u) = I - 2\pi i f(u)g^{T}(u).
\end{equation}
\item[(c)] $Y(z) = I + \bigO(z^{-1})$ as $z \to \infty$.
\item[(d)] $Y(z) = \bigO(\log(z-x_j))$ as $z \to x_{j}$, for each $j = 0,...,k$ (with $x_0 = 0$).
\end{itemize}
For $u,v \in (0,x_{k})$, the resolvent can now be written as \cite{DeiftItsZhou}
\begin{equation}\label{R in terms of Y}
\hspace{-0.8cm} R(u,v) = \frac{F^{T}(u)G(v)}{u-v}, \hspace{-0.1cm} \quad \mbox{where} \hspace{-0.1cm} \quad F(u) = Y_{+}(u)f(u) \hspace{-0.1cm} \quad \mbox{and} \hspace{-0.1cm} \quad G(v) = (Y_{+}^{-1}(v))^{T}g(v).
\end{equation}
Now we want to relate $Y$ with $\Phi$. Let us consider $X(z) = \widetilde{Y}(z) \widetilde{P}_{\mathrm{Be}}(z)$, where $\widetilde{Y}(z) = \sigma_{3}Y(-z)\sigma_{3}$ and $\widetilde{P}_{\mathrm{Be}}$ is the solution of a modified Bessel model RH problem, defined in \eqref{def of P_Be tilde}. Since $\widetilde{Y}$ is analytic on $\Sigma_{1} \cup \Sigma_{2} \cup (-\infty,-x_{k})$, from the jumps of $\widetilde{P}_{\mathrm{Be}}$, it is direct that $X$ has exactly the same jumps as $\Phi$ on $\Sigma_{1} \cup \Sigma_{2} \cup (-\infty,-x_{k})$. The jumps $J_{X}$ of $X$ are \emph{a priori} more involved on $(-x_{k},0)$. They are given by
\begin{equation}\label{lol 12}
J_{X}(-u) = \begin{pmatrix}
e^{\pi i  \alpha} & 1 \\ 0 & e^{-\pi i \alpha}
\end{pmatrix}\widetilde{P}_{\mathrm{Be},+}^{-1}(-u)J_{\widetilde{Y}}(-u)\widetilde{P}_{\mathrm{Be},+}(-u), \qquad u \in (0,x_{k}),
\end{equation}
where $J_{\widetilde{Y}}$ is the jump of $\widetilde{Y}$, given by
\begin{equation}
J_{\widetilde{Y}}(-u) = \sigma_{3} J_{Y}(u)^{-1} \sigma_{3}, \qquad u \in (0,x_{k}).
\end{equation}
For $u \in (0,x_{k})$, by \eqref{lol 10} and \eqref{lol 11}, we have
\begin{equation}
J_{\widetilde{Y}}(-u) = \widetilde{P}_{\mathrm{Be},+}(-u)\begin{pmatrix}
1 & -e^{-\pi i \alpha}\sum_{j=1}^{k}(1-s_{j})\chi_{(x_{j-1},x_{j})}(u) \\
0 & 1
\end{pmatrix}\widetilde{P}_{\mathrm{Be},+}^{-1}(-u).
\end{equation}
By plugging it into \eqref{lol 12}, $J_{X}$ is simply reduced to
\begin{equation}
J_{X}(-u) = \begin{pmatrix}
e^{\pi i \alpha} & \sum_{j=1}^{k}s_{j}\chi_{(x_{j-1},x_{j})}(u) \\ 0 & e^{-\pi i \alpha}
\end{pmatrix}, \qquad u \in (0,x_{k}),
\end{equation}
which is precisely the same jump as $\Phi(z;\vec{x},\vec{s})$ for $z \in (-x_{k},0)$. On the other hand, from \eqref{large z asymptotics modified Bessel}, as $z \to \infty$ we have
\begin{equation}
X(z) = e^{-\frac{\pi i}{4}\sigma_{3}}
\begin{pmatrix}
1 & 0 \\ \frac{-i}{8}(4\alpha^{2}+3) & 1
\end{pmatrix}\left(I+\bigO( z ^{-1})\right)  z ^{\frac{-\sigma_{3}}{4}} N e^{ z ^{\frac{1}{2}}\sigma_{3}}.
\end{equation}
Thus by uniqueness of the solution of the RH problem for $\Phi$, see Remark \ref{remark: uniqueness}, we have
\begin{equation}\label{Phi in terms of Y}
\Phi(z;\vec{x},\vec{s}) = \begin{pmatrix}
1 & 0 \\ \frac{i}{8}(4\alpha^{2}+3) & 1
\end{pmatrix} e^{\frac{\pi i }{4}\sigma_{3}} \widetilde{Y}(z) \widetilde{P}_{\mathrm{Be}}(z).
\end{equation}
Since from our proof, the matrix $\widetilde{Y}$ on the right hand side exists and is constructed explicitly in terms of $(1-\mathcal{K})^{-1}$ (see \eqref{def of Y}), it also proves the existence of a solution for the RH problem for $\Phi$. Note that \eqref{lol 10} and \eqref{lol 11} can equivalently be written as
\begin{equation*}
\hspace{-0.4cm}\widetilde{P}_{\mathrm{Be},-}^{-1}(-u)\sigma_{3}f(u) = \frac{c}{2}  \chi_{(0,x_{k})}(u) \begin{pmatrix}
1 \\ 0
\end{pmatrix}, \hspace{-0.25cm} \quad \widetilde{P}_{\mathrm{Be},-}(-v)^{T}\hspace{-0.1cm}\sigma_{3}g(v) = c\sum_{j=1}^{k}(1-s_{j})\chi_{(x_{j-1},x_{j})}(v)  \begin{pmatrix}
0 \\ 1
\end{pmatrix}\hspace{-0.1cm},
\end{equation*}
where $u,v \in \mathbb{R}^{+}$ and $c = \frac{e^{\frac{\pi i}{4}}e^{\frac{\alpha \pi i}{2}}}{\sqrt{\pi}}$. Thus for $u,v \in \mathbb{R}^{+}$, we have
\begin{equation}
R(u,v) = \frac{c^{2}}{2} \frac{[\Phi_{-}^{-1}(-v;\vec{x},\vec{s})\Phi_{-}(-u;\vec{x},\vec{s})]_{21}}{u-v}\chi_{(0,x_{k})}(u) \sum_{j=1}^{k}(1-s_{j})\chi_{(x_{j-1},x_{j})}(v) .
\end{equation}
By taking the limit $v\to u$ for a certain $u \in (x_{j-1},x_{j})$ with $j \in \{1,...,k\}$ in the above expression, we obtain
\begin{equation}\label{lol 17}
R(u,u) = - \frac{c^{2}}{2}(1-s_{j}) \left[ \Phi_{-}(-u;\vec{x},\vec{s})^{-1} \Phi_{-}^{\prime}(-u;\vec{x},\vec{s}) \right]_{21}.
\end{equation}
Taking now the limit $u \nearrow x_{j}$ in \eqref{lol 17} and substituting the result in \eqref{lol 7} and \eqref{lol 8}, we obtain an explicit differential identity in terms of $\Phi$ for each $j \in \{1,...,k\}$:
\begin{equation}\label{k diff identities}
\partial_{x_{j}} \log F(\vec{x},\vec{s}) = -(s_{j+1}-s_{j}) \frac{e^{\pi i \alpha}}{2\pi i} \lim_{z \to -x_{j}}[\Phi^{-1}(z;\vec{x},\vec{s})\Phi^{\prime}(z;\vec{x},\vec{s})]_{21},
\end{equation}
where the limit is taken from $z \in \mathcal{I}_{4}$, with $\mathcal{I}_{4}$ as shown in Figure \ref{figPhi}. By simple compositions, we can use the above identities to get
\begin{equation}\label{diff identity main identity}
\begin{array}{r c l}
\hspace{-0.3cm}\displaystyle \partial_{x} \log F(\vec{r}x,\vec{s}) & = & \displaystyle \sum_{j=1}^{k} r_{j} \partial_{x_{j}} \log \left.F(\vec{x},\vec{s})\right|_{\vec{x}= \vec{r}x} \\
& = & \displaystyle -\sum_{j=1}^{k} r_{j} (s_{j+1}-s_{j}) \frac{e^{\pi i \alpha}}{2\pi i}\lim_{z \to -r_{j}x}[\Phi^{-1}(z;\vec{r}x,\vec{s})\Phi^{\prime}(z;\vec{r}x,\vec{s})]_{21}.
\end{array}
\end{equation}
Let $\epsilon$ and $x$ be such that $0<\epsilon < x$. By integrating the above expression from $\epsilon$ to $x$, this gives
\begin{equation}\label{lol 18}
\hspace{-0.9cm}\log \frac{F(\vec{r}x,\vec{s})}{F(\vec{r}\epsilon,\vec{s})} = -\sum_{j=1}^{k} r_{j} (s_{j+1}-s_{j}) \frac{e^{\pi i \alpha}}{2\pi i} \int_{\epsilon}^{x} \lim_{z \to -r_{j}\xi}[\Phi^{-1}(z;\vec{r}\xi,\vec{s})\Phi^{\prime}(z;\vec{r}\xi,\vec{s})]_{21}d\xi.
\end{equation}
Integrating it by parts and using Proposition \ref{Prop: Phi in terms of q_j}, one has
\begin{multline}\label{lol 9}
\hspace{-1cm}\int_{\epsilon}^{x} \lim_{z \to -r_{j}\xi}[\Phi^{-1}(z;\vec{r}\xi,\vec{s})\Phi^{\prime}(z;\vec{r}\xi,\vec{s})]_{21}d\xi =  \log \left(\frac{x}{\epsilon}\right)\epsilon \lim_{z \to -r_{j}\epsilon}[\Phi^{-1}(z;\vec{r}\epsilon,\vec{s})\Phi^{\prime}(z;\vec{r}\epsilon,\vec{s})]_{21}  \\ +  \frac{2\pi ie^{-\pi i \alpha}}{s_{j+1}-s_{j}} \int_{\epsilon}^{x} \log \left( \frac{x}{\xi} \right) \frac{q_{j}^{2}(\xi)}{4}d\xi.
\end{multline}
We will prove in the next section that 
\begin{align}
& \lim_{\epsilon \to 0} \log \left(\frac{x}{\epsilon}\right)\epsilon \lim_{z \to -r_{j}\epsilon}[\Phi^{-1}(z;\vec{r}\epsilon,\vec{s})\Phi^{\prime}(z;\vec{r}\epsilon,\vec{s})]_{21}  = 0, \label{small eps limit phi} \\
& \mbox{ and that } \quad \int_{0}^{x} \log \left( \frac{\xi}{x} \right) q_{j}^{2}(\xi)d\xi \in \mathbb{R} \quad \mbox{ for every } j \in \{1,...,k\}. \label{existence of the integral}
\end{align}
Thus, taking the limit $\epsilon \to 0$ in \eqref{lol 9} and in \eqref{lol 18} gives, using $F(\vec{0},\vec{s}) = 1$ (see \eqref{F Fredholm}), the following identity
\begin{equation}
\log F(\vec{r}x,\vec{s}) = - \sum_{k=1}^{k} \frac{r_{j}}{4} \int_{0}^{x} \log \left( \frac{x}{\xi} \right) q_{j}^{2}(\xi)d\xi.
\end{equation}
Apart from \eqref{small eps limit phi} and \eqref{existence of the integral}, this finishes the proof of Theorem \ref{thm: main thm}.

\section{Small $x$ asymptotics}
\label{Section: small x asymptotics}

In this section, we perform a Deift/Zhou steepest descent \cite{DeiftZhou1992, DeiftZhou, DKMVZ2, DKMVZ1} to obtain small $x$ asymptotics for $\Phi(z;\vec{r}x,\vec{s})$ uniformly in $z$, and where $\vec{r}$ and $\vec{s}$ are independent of $x$ and satisfy conditions \eqref{condition r} and \eqref{condition s}.

\subsection{First transformation $\Phi \mapsto W$}

The first transformation consists of making the rays $\Sigma_{1}$ and $\Sigma_{2}$ ending at $0$ instead of $-r_{k}x$, we define
\begin{equation}
W(z) = \Phi(z;\vec{r}x,\vec{s}) H_{-r_{k}x}(z)^{-1}H_{0}(z).
\end{equation}
It satisfies the following RH problem.
\subsubsection*{RH problem for $W$}
\begin{itemize}
\item[(a)] $W : \mathbb{C}\setminus \Big((-\infty,0]\cup e^{\pm \frac{2\pi i}{3}}\mathbb{R}^{+}\Big) \to \mathbb{C}^{2\times 2}$ is analytic, where the rays $e^{\pm \frac{2\pi i}{3}}\mathbb{R}^{+}$ are oriented from $e^{\pm \frac{2\pi i}{3}}\infty$ to $0$.
\item[(b)] The jumps for $W$ are given by 
\begin{align}
& \hspace{-0.5cm}W_{+}(z) = W_{-}(z) \begin{pmatrix}
1 & 0 \\ e^{\pi i \alpha} & 1
\end{pmatrix}, & & z \in e^{\frac{2\pi i}{3}}\mathbb{R}^{+}, \\
& \hspace{-0.5cm}W_{+}(z) = W_{-}(z) \begin{pmatrix}
0 & 1 \\ -1 & 0
\end{pmatrix}, & & z \in (-\infty,-r_{k}x), \\
& \hspace{-0.5cm}W_{+}(z) = W_{-}(z) \begin{pmatrix}
1 & 0 \\ e^{-\pi i \alpha} & 1
\end{pmatrix}, & & z \in e^{\frac{- 2\pi i}{3}}\mathbb{R}^{+}, \\
& \hspace{-0.5cm}W_{+}(z) = W_{-}(z) \begin{pmatrix}
e^{\pi i \alpha}(1-s_{j}) & s_{j} \\ s_{j}-2 & e^{-\pi i \alpha}(1-s_{j})
\end{pmatrix}, & & z \in (-r_{j}x,-r_{j-1}x),
\end{align}
where $j =1,...,k$.
\item[(c)] As $z \to \infty$, we have 
\begin{equation}\label{W inf}
W(z) = \big(I + \bigO(z^{-1})\big)z^{-\frac{\sigma_{3}}{4}}Ne^{z^{\frac{1}{2}}\sigma_{3}}.
\end{equation}

As $z$ tends to $-r_{j}x$, $j \in \{1,...,k\}$, the behaviour of $W$ is
\begin{equation}\label{W x_j}
W(z) = \Phi_{0,j}(z) \begin{pmatrix}
1 & \frac{s_{j+1}-s_{j}}{2\pi i} \log (z+r_{j}x) \\ 0 & 1
\end{pmatrix} V_{j}(z) e^{\frac{\pi i \alpha}{2}\theta(z)\sigma_{3}} H_{0}(z).
\end{equation}

As $z$ tends to $0$, the behaviour of $W$ is
\begin{equation}\label{W 0}
W(z) = \Phi_{0,0}(z)z^{\frac{\alpha}{2}\sigma_{3}} \begin{pmatrix}
1 & s_{1}h(z) \\ 0 & 1
\end{pmatrix}H_{0}(z).
\end{equation}
\end{itemize}
\subsection{Global parametrix}
Ignoring a small neighbourhood of $0$, we are left with a Riemann-Hilbert problem which is independent of $x$. We denote the solution of this RH problem $P^{(\infty)}$. The jumps of $P^{(\infty)}$, as well as its asymptotic behaviour at $\infty$ \eqref{W inf}, are the same of those of the Bessel model RH problem of order $\alpha$ presented in Appendix \ref{Appendix: Bessel} (the solution of the Bessel model RH problem is denoted $P_{\mathrm{Be}}(z;\alpha)$). If we don't specify the behaviour of the global parametrix near $z = 0$, the solution is not unique and for example $P^{(\infty)}(z) = P_{\mathrm{Be}}(z;\alpha + 2 n)$ for any $n \in \mathbb{N}$ is a solution. In order to have later the matching condition with the local parametrix, see \eqref{matching condition}, we choose the global parametrix to be
\begin{equation}\label{def of P inf}
P^{(\infty)}(z) = P_{\mathrm{Be}}(z;\alpha).
\end{equation}
\subsection{Local parametrix}
Inside a fixed disk $D_{0}$ around $0$, we want the local parametrix $P$ to satisfy the following RH problem:
\subsubsection*{RH problem for $P$}
\begin{itemize}
\item[(a)] $P : D_{0}\setminus \Big((-\infty,0]\cup e^{\pm \frac{2\pi i}{3}}\mathbb{R}^{+}\Big) \to \mathbb{C}^{2\times 2}$ is analytic.
\item[(b)] For $z \in D_{0} \cap \Big((-\infty,0]\cup e^{\pm \frac{2\pi i}{3}}\mathbb{R}^{+}\Big)$, $P(z)$ has the same jumps as $W(z)$, i.e. we have $P_{-}^{-1}(z)P_{+}(z) = W_{-}^{-1}(z)W_{+}(z)$.
\item[(c)] As $x \to 0$, we have 
\begin{equation}\label{matching condition}
P(z) = \big(I + \bigO(x)\big)P^{(\infty)}(z),
\end{equation}
uniformly for $z \in \partial D_{0}$.

\item[(d)]As $z$ tends to $-r_{j}x$, $j \in \{0,1,...,k\}$, we have 
\begin{equation}\label{local behaviour of the local parametrix}
W(z)P^{-1}(z) = \bigO(1).
\end{equation}
\end{itemize}
It can be directly verified that the following matrix satisfies conditions (a), (b) and (d) of the above RH problem:
\begin{equation}
P(z) = P_{\mathrm{Be},0}(z;\alpha)\begin{pmatrix}
1 & f(z;x) \\ 0 & 1
\end{pmatrix}z^{\frac{\alpha}{2}\sigma_{3}}\begin{pmatrix}
1 & h(z) \\ 0 & 1
\end{pmatrix}H_{0}(z),
\end{equation}
where $P_{\mathrm{Be},0}(z;\alpha)$ is analytic in a neighbourhood of $0$ and defined in \eqref{local Bessel behaviour}, and where $f$ is given by
\begin{equation}\label{def of f}
f(z;x) = \frac{-1}{2\pi i} \sum_{j=1}^{k}(1-s_{j}) \int_{-r_{j}x}^{-r_{j-1}x} \frac{|s|^{\alpha}}{s-z}ds.
\end{equation}
From \eqref{local Bessel behaviour} and \eqref{def of P inf}, we have
\begin{equation}
\hspace{-0.7cm}P(z)P^{(\infty)}(z)^{-1} = P_{\mathrm{Be},0}(z;\alpha) \begin{pmatrix}
1 & f(z;x) \\ 0 & 1 
\end{pmatrix} P_{\mathrm{Be},0}(z;\alpha)^{-1} = I + \bigO(x),
\end{equation}
as $x \to 0$ uniformly for $z \in \partial D_{0}$, and the matching condition \eqref{matching condition} holds.

\subsection{Small norm RH problem}
We define
\begin{equation}
R(z) = \left\{ \begin{array}{l l}
W(z)P^{(\infty)}(z)^{-1}, & z \in \mathbb{C}\setminus D_{0}, \\
W(z)P(z)^{-1}, & z \in  D_{0}.
\end{array} \right.
\end{equation}
Since $P^{(\infty)}$ (resp. $P$) has the same jumps as $W$ on $\mathbb{C}\setminus D_{0}$ (resp. on $D_{0}$), $R$ is analytic on $\mathbb{C}\setminus (\partial D_{0}\cup \{0,-r_{1}x,...,-r_{k}x\}$. Furthermore, from \eqref{local behaviour of the local parametrix}, $R$ is bounded near $0,-r_{1}x,...,-r_{k}x$ and thus $0,-r_{1}x,...,-r_{k}x$ are removable singularities. It follows that $R$ is analytic on $\mathbb{C}\setminus \partial D_{0}$. Also, from \eqref{W inf}, \eqref{def of P inf} and \eqref{large z asymptotics Bessel}, we have that $R(z) = I + \bigO(z^{-1})$ as $z \to \infty$. Let us put the clockwise orientation on $\partial D_{0}$. The jumps $J_{R}(z) = R_{-}^{-1}(z)R_{+}(z)$ satisfy 
\begin{equation}
J_{R}(z) = P(z)P^{(\infty)}(z)^{-1} = I+\bigO(x), \qquad \mbox{ as } x \to 0 \mbox{ uniformly for } z \in \partial D_{0},
\end{equation}
where we have used \eqref{matching condition}.
 It follows from standard theory for small norm RH problems that $R$ exists for sufficiently small $x$ and satisfies
\begin{equation}\label{asymptotics for R small x}
R(z) = I + \bigO(x), \qquad R^{\prime}(z) = \bigO(x),
\end{equation}
uniformly for $z \in \mathbb{C}\setminus \partial D_{0}$. We are now in a position to compute the small $x$ asymptotics of $b_{0}(x)$,...,$b_{k}(x)$. Inverting the transformations $R \mapsto W \mapsto \Phi$, we obtain for $z \in D_{0}$ that
\begin{equation}\label{explicit expression for Phi in the small x}
\Phi(z;\vec{r}x,\vec{s}) = R(z)P_{\mathrm{Be},0}(z;\alpha)\begin{pmatrix}
1 & f(z;x) \\ 0 & 1
\end{pmatrix}z^{\frac{\alpha}{2}\sigma_{3}}\begin{pmatrix}
1 & h(z) \\ 0 & 1
\end{pmatrix} H_{-r_{k}x}(z).
\end{equation}
By \eqref{def of Phi tilde}, \eqref{Lax pair} and \eqref{def of b_j}, for any $j \in \{1,...,k\}$ we have
\begin{equation}\label{lol 3}
b_{j}(\sqrt{x}) = i\sqrt{x} \lim_{z\to -r_{j}}(z+r_{j}) \left[ \partial_{z} (\Phi(xz;\vec{r}x,\vec{s}))\Phi^{-1}(xz;\vec{r}x,\vec{s}) \right]_{12}.
\end{equation}
Using \eqref{explicit expression for Phi in the small x} and the small $x$ asymptotics for $R$ given by \eqref{asymptotics for R small x}, after some calculations we obtain for $j \in \{1,...,k\}$ that
\begin{equation}
b_{j}(\sqrt{x}) = i \sqrt{x} (1+\bigO(x)) P_{\mathrm{Be},0,11}^{2}(0;\alpha) \lim_{z \to -r_{j}} (z+r_{j}) \partial_{z} \big(f(xz;x)\big).
\end{equation}
For $j \in \{1,...,k-1\}$, only two terms in the sum \eqref{def of f} contribute to this limit. After a straightforward calculation we have that
\begin{equation*}
\lim_{z\to -r_{j}} (z+r_{j})\hspace{-0.05cm} \int_{-r_{j}x}^{-r_{j-1}x} \hspace{-0.25cm}\frac{x|s|^{\alpha}}{(s-xz)^{2}}ds = - (r_{j}x)^{\alpha}, 
\end{equation*}
\begin{equation*}
\lim_{z\to -r_{j}} (z+r_{j})\hspace{-0.05cm} \int_{-r_{j+1}x}^{-r_{j}x} \hspace{-0.05cm} \frac{x|s|^{\alpha}}{(s-xz)^{2}}ds = (r_{j}x)^{\alpha}.
\end{equation*}
For $j = k$, the analysis is slightly simpler, because only one term in the sum \eqref{def of f} contributes to the limit \eqref{lol 3}. Thus, we obtain as $x \to 0$
\begin{equation}\label{lol 2}
b_{j}(\sqrt{x}) = \frac{\sqrt{x}}{2\pi} P_{\mathrm{Be},0,11}^{2}(0;\alpha)(r_{j}x)^{\alpha}(s_{j+1}-s_{j})(1+\bigO(x)), \qquad j \in \{1,...,k\}.
\end{equation}
We will now use the explicit form of $P_{\mathrm{Be}}$ given in the appendix, see \eqref{def of model Bessel}. Since $P_{\mathrm{Be},0,11}(z;\alpha)$ is an entire function in $z$, we can obtain $P_{\mathrm{Be},0,11}(0;\alpha)$ by taking the limit $z\to 0$ from any region. In particular, for $z \in \{  z \in \mathbb{C}:|\arg(z)|<\frac{2\pi}{3} \}$, we have
\begin{equation}
P_{\mathrm{Be},0,11}(z;\alpha) = \sqrt{\pi}z^{-\frac{\alpha}{2}}I_{\alpha}(z^{\frac{1}{2}}).
\end{equation}
By using the behaviour of $I_{\alpha}(z)$ as $z \to 0$ (see \cite[formula 10.30.1]{NIST}) we obtain $P_{\mathrm{Be},0,11}(0;\alpha) = \frac{\sqrt{\pi}}{2^{\alpha}\Gamma(\alpha+1)}$. The equation \eqref{lol 2} can now be rewritten as
\begin{equation}\label{asymptotics behaviour as x to 0 for b_j}
b_{j}(\sqrt{x}) = \frac{\sqrt{x}(s_{j+1}-s_{j})}{2} J_{\alpha}\big(\sqrt{r_{j}x}\big)^{2}(1+\bigO(x)), \qquad \mbox{ as } x \to 0,
\end{equation}
for any $j \in \{1,...,k\}$. By the definition of $q_{j}$, see \eqref{def of q_j}, we have 
\begin{equation}
q_{j}(x) = \sqrt{s_{j+1}-s_{j}} J_{\alpha}(\sqrt{r_{j}x}), \qquad \mbox{ as } x \to 0,
\end{equation}
which is precisely the boundary conditions of the system \eqref{system of ODEs for the q_j}. In particular, the functions $q_{1}^{2}(x)$,...,$q_{k}^{2}(x)$ are integrable on $(0,\epsilon)$ for any $\epsilon >0$, and this proves \eqref{existence of the integral}. Also, \eqref{explicit expression for Phi in the small x} implies that as $x \to 0$ we have
\begin{equation*}
\begin{array}{r c l}
\displaystyle \lim_{z \to -r_{j}x}[\Phi^{-1}(z;\vec{r}x,\vec{s})\Phi^{\prime}(z;\vec{r}x,\vec{s})]_{21} & = & e^{-\pi i \alpha}(r_{j}x)^{\alpha} \left( [P_{\mathrm{Be},0}^{-1}(0;\alpha)P^{\prime}_{\mathrm{Be},0}(0;\alpha)]_{21} + \bigO(x) \right) \\
& = & \bigO(x^{\alpha}),
\end{array}
\end{equation*}
for every $j \in \{1,...,k\}$, and where the limit in taken from $z \in \mathcal{I}_{4}$. This proves \eqref{small eps limit phi}.
\section{Asymptotics for $s_{j} \to s_{j+1}$, $j \in \{1,...,k\}$}
\label{Section: asymptotics as s_j to s_j+1}
In this section, we perform a Deift/Zhou steepest descent \cite{DeiftZhou1992, DeiftZhou, DKMVZ2, DKMVZ1} to obtain asymptotics as $s_{j} \to s_{j+1}$ for $\Phi(z;\vec{r}x,\vec{s})$ uniformly in $z$, and where $\vec{r}$ and $\vec{s}$ satisfy conditions \eqref{condition r} and \eqref{condition s}. Let us fixed $j \in \{1,...,k\}$ in this section. If $j \neq 1$, we assume furthermore that $s_{j+1}\neq s_{j-1}$. When $s_{j} \to s_{j+1}$, the jumps of $\Phi$ on $(-r_{j}x,-r_{j-1}x)$ tend to be the same as those on $(-r_{j+1}x,-r_{j}x)$ and therefore the logarithmic singularity at $z=-r_{j}x$ for $\Phi(z;\vec{r}x,\vec{s})$ tends to disappear. Consider $U_{j}$, a fixed open neighbourhood of $[-r_{j}x,-r_{j-1}x]$ with smooth boundaries, sufficiently small such that it does not include any $-r_{\ell}x$, $\ell \neq j$, $\ell \neq j-1$. Outside $U_{j}$, the model RH problem $\Phi(z;\vec{r}^{[j]}x,\vec{s}^{[j]})$ possesses exactly the same jumps and the same large $z$ asymptotics than $\Phi(z;\vec{r}x,\vec{s})$, and thus heuristically it is a good approximation of $\Phi(z;\vec{r}x,\vec{s})$ for $z \in \mathbb{C}\setminus U_{j}$. Furthermore, for $z \in U_{j}$, from \eqref{Phi x_j} and \eqref{Phi 0}, $\Phi(z;\vec{r}^{[j]}x,\vec{s}^{[j]})$ can be written as
\begin{equation}\label{lolo 1}
\hspace{-0.2cm}\Phi(z;\vec{r}^{[j]}x,\vec{s}^{[j]}) \hspace{-0.06cm} =  \hspace{-0.06cm}
\Phi_{0,j-1}^{\star}(z) \begin{pmatrix}
1 & \hspace{-0.15cm} \frac{s_{j+1}-s_{j-1}}{2\pi i} \log(z+r_{j-1}x) \\ 0 & \hspace{-0.15cm} 1
\end{pmatrix} \hspace{-0.08cm} V_{j-1}(z)e^{\frac{\pi i \alpha}{2}\theta(z)\sigma_{3}} H_{-r_{k}x}(z) 
\end{equation}
if $j \in \{2,...,k\}$, and as
\begin{equation}\label{lolo 2}
\Phi(z;\vec{r}^{[j]}x,\vec{s}^{[j]}) = \Phi_{0,0}^{\star}(z)z^{\frac{\alpha}{2}\sigma_{3}} \begin{pmatrix}
1 & s_{2}h(z) \\ 0 & 1
\end{pmatrix}, \qquad \mbox{ if } j = 1.
\end{equation}
Therefore, we define the local parametrix inside $U_{j}$ by
\begin{multline}\label{lolo 3}
P(z) =
\Phi_{0,j-1}^{\star}(z) \begin{pmatrix}
1 & \frac{s_{j+1}-s_{j}}{2\pi i} \log(z+r_{j}x) + \frac{s_{j}-s_{j-1}}{2\pi i} \log(z+r_{j-1}x) \\ 0 & 1
\end{pmatrix} \\ \times V_{j-1}(z)e^{\frac{\pi i \alpha}{2}\theta(z)\sigma_{3}} H_{-r_{k}x}(z),
\end{multline}
if $j \in \{2,...,k\}$, and by
\begin{equation}\label{lolo 4}
P(z) = \Phi_{0,0}^{\star}(z) \begin{pmatrix}
1 & \widetilde{f}(z;x)  \\ 0 & 1
\end{pmatrix} z^{\frac{\alpha}{2}\sigma_{3}} \begin{pmatrix}
1 & s_{2}h(z) \\ 0 & 1
\end{pmatrix}, \qquad \mbox{ if } j = 1,
\end{equation}
where $\widetilde{f}(z;x) = -\frac{s_{2}-s_{1}}{2\pi i}\int_{-r_{1}x}^{0} \frac{|s|^{\alpha}}{s-z}ds$. 
It is direct to check that $P(z)$ has exactly the same jumps as $\Phi(z;\vec{r}x,\vec{s})$ inside $U_{j}$. We define
\begin{equation}\label{def of R simple asymptotics}
R(z) = \left\{ \begin{array}{l l}
\Phi(z;\vec{r}x,\vec{s}) \Phi(z;\vec{r}^{[j]}x,\vec{s}^{[j]})^{-1}, & \mbox{ for } z \in \mathbb{C}\setminus U_{j}, \\
\Phi(z;\vec{r}x,\vec{s}) P(z)^{-1}, & \mbox{ for } z \in U_{j}.
\end{array} \right.
\end{equation}
From the above remarks, it follows that $R(z) = I + \bigO(z^{-1})$ as $z \to \infty$ and $R(z)$ has no jump at all inside and outside $U_{j}$, and has removable singularities at $0$,$-x_{1}$,...,$-x_{k}$. Let us denote the boundaries of $U_{j}$ by $\partial U_{j}$, whose orientation is chosen to be clockwise. For $z \in \partial U_{j}$, the jumps $J_{R}$ of $R$ satisfy $J_{R}(z) = P(z)\Phi(z;x,\vec{r}^{[j]},\vec{s}^{[j]})^{-1}$, or more explicitly
\begin{equation}\label{jumps for R simple asymptotics}
\hspace{-0.3cm}J_{R}(z) = \left\{ \hspace{-0.12cm} \begin{array}{l l}
\Phi_{0,j-1}^{\star}(z) \begin{pmatrix}
1 & \frac{s_{j+1}-s_{j}}{2\pi i}\log\left(\frac{z+r_{j}x}{z+r_{j-1}x}\right) \\
0 & 1
\end{pmatrix}\Phi_{0,j-1}^{\star}(z)^{-1}, & \mbox{if } j \in \{2,...,k\}, \\
\Phi_{0,0}^{\star}(z) \begin{pmatrix}
1 & \widetilde{f}(z;x)\\
0 & 1
\end{pmatrix}\Phi_{0,0}^{\star}(z)^{-1}, & \mbox{if } j = 1.
\end{array} \right.
\end{equation}
In all the cases, we thus have $J_{R}(z) = I + \bigO(s_{j+1}-s_{j})$ as $s_{j+1}-s_{j} \to 0$ uniformly for $z \in \partial U_{j}$. It follows from standard analysis for small norm RH problems that $R$ exists for sufficiently small $s_{j+1}-s_{j}$ and satisfies
\begin{equation}\label{asymptotics small norm R s_j regime}
R(z) = I + \bigO(s_{j+1}-s_{j}), \qquad R^{\prime}(z) = \bigO(s_{j+1}-s_{j}),
\end{equation}
uniformly for $z \in \mathbb{C}\setminus  \partial U_{j}$. We now turn to the small $s_{j+1}-s_{j}$ asymptotics for $q_{1}$,...,$q_{k}$. For convenience, we rewrite \eqref{lol 3}, but we explicit the dependence in $\vec{r}$ and in $\vec{s}$:
\begin{equation}\label{lol 4}
b_{\ell}(\sqrt{x};\vec{r},\vec{s}) = i\sqrt{x} \lim_{z\to -r_{\ell}}(z+r_{\ell}) \left[ \partial_{z} (\Phi(xz;\vec{r}x,\vec{s}))\Phi^{-1}(xz;\vec{r}x,\vec{s}) \right]_{12},
\end{equation}
for any $\ell \in \{1,...,k\}$.

\hspace{-0.6cm}If $\ell \neq j$ and $\ell \neq j-1$, then $-r_{\ell}x \notin U_{j}$, and in the above limit, from \eqref{def of R simple asymptotics} we have to use $\Phi(xz;\vec{r}x,\vec{s}) = R(xz) \Phi(xz;\vec{r}^{[j]}x,\vec{s}^{[j]})$, and thus
\begin{multline}\label{lol 5}
\partial_{z} (\Phi(xz;\vec{r}x,\vec{s}))\Phi^{-1}(xz;\vec{r}x,\vec{s}) = \partial_{z}(R(xz))R(xz)^{-1} \\ + R(xz) \partial_{z} (\Phi(xz;\vec{r}^{[j]}x,\vec{s}^{[j]}))\Phi^{-1}(xz;\vec{r}^{[j]}x,\vec{s}^{[j]}) R(xz)^{-1}.
\end{multline}
By \eqref{asymptotics small norm R s_j regime} and \eqref{lol 4}, this implies
\begin{align}
& b_{\ell}(\sqrt{x};\vec{r},\vec{s}) = b_{\tilde{\ell}}(\sqrt{x};\vec{r}^{[j]},\vec{s}^{[j]}) +\bigO(s_{j+1}-s_{j}), \qquad \ell \notin\{ j,j-1 \}, 
\end{align}
where $\tilde{\ell} = \ell$ if $\ell < j-1$ and $\tilde{\ell} = \ell-1$ if $\ell > j$.
If $\ell \in \{j,j-1\}$, then $-r_{\ell}x \in U_{j}$ and we have to use the local parametrix:
\begin{multline}\label{lol 6}
b_{\ell}(\sqrt{x},\vec{r},\vec{s}) = i\sqrt{x} \lim_{z\to -r_{\ell}}(z+r_{\ell}) \left[ \partial_{z}(R(xz))R(xz)^{-1} \right. \\ \left. + R(xz)\partial_{z}(P(xz))P^{-1}(xz)R^{-1}(xz) \right]_{12}.
\end{multline}
Note from \eqref{lolo 1} and \eqref{lolo 3} that for $j \in \{2,...,k\}$ we have as $z \to -r_{j-1}$ that
\begin{equation*}
\frac{\displaystyle [\partial_{z}(P(xz))P^{-1}(xz)]_{12}}{\displaystyle [\partial_{z} (\Phi(xz;\vec{r}^{[j]}x,\vec{s}^{[j]}))\Phi^{-1}(xz;\vec{r}^{[j]}x,\vec{s}^{[j]})]_{12}} \sim \left( 1- \frac{s_{j+1}-s_{j}}{s_{j+1}-s_{j-1}} \right),
\end{equation*}
and for $j = 1$, from \eqref{lolo 2} and \eqref{lolo 4}, we have as $z \to 0$ that
\begin{equation*}
\frac{\displaystyle [\partial_{z}(P(xz))P^{-1}(xz)]_{12}}{\displaystyle [\partial_{z} (\Phi(xz;\vec{r}^{[j]}x,\vec{s}^{[j]}))\Phi^{-1}(xz;\vec{r}^{[j]}x,\vec{s}^{[j]})]_{12}} \sim \frac{\displaystyle [\Phi_{0,0}^{\star}(0)\begin{pmatrix}
1 & \widetilde{f}(0) \\ 0 & 1
\end{pmatrix}\sigma_{3}\begin{pmatrix}
1 & -\widetilde{f}(0) \\ 0 & 1
\end{pmatrix}\Phi_{0,0}^{\star}(0)^{-1}]_{12}}{\displaystyle [\Phi_{0,0}^{\star}(0)\sigma_{3}\Phi_{0,0}^{\star}(0)^{-1}]_{12}}.
\end{equation*}
Thus, we also obtain $b_{j-1}(\sqrt{x};\vec{r},\vec{s}) = b_{j-1}(\sqrt{x};\vec{r}^{[j]},\vec{s}^{[j]})+\bigO(s_{j+1}-s_{j})$ as $s_{j}\to s_{j+1}$. When $\ell = j \in \{2,...,k\}$, from \eqref{lolo 3} and \eqref{lol 4}, we have as $s_{j} \to s_{j+1}$
\begin{equation}
b_{j}(\sqrt{x};\vec{r},\vec{s}) = i\sqrt{x} \frac{s_{j+1}-s_{j}}{2\pi i}  (\Phi_{0,j-1}^{\star}(-r_{j}x))_{11}^{2}(1+\bigO(s_{j+1}-s_{j})) = \bigO(s_{j+1}-s_{j}).
\end{equation}
For $\ell=j=1$, from \eqref{lolo 4}, we obtain similarly that
\begin{equation}\label{lol 19}
b_{1}(\sqrt{x};\vec{r},\vec{s}) = i\sqrt{x}\frac{s_{2}-s_{1}}{2\pi i}(r_{1}x)^{\alpha}(\Phi_{0,0}^{\star}(-r_{1}x))_{11}^{2}(1+\bigO(s_{2}-s_{1}))= \bigO(s_{2}-s_{1}).
\end{equation}
This finishes the proof of part 1 of Theorem \ref{thm: residual thm}.

\section{Asymptotics for $r_{j} \to r_{j-1}$, $j \in \{1,...,k\}$}
\label{Section: Asymptotics as r_j to r_j-1}
As $r_{j} \to r_{j-1}$, the jumps along $(-r_{j}x,-r_{j-1}x)$ tends to disappear. Thus, we do the exactly the same steepest descent as in the previous section. The computations are very similar and we will give less details. We define $R$ exactly as in \eqref{def of R simple asymptotics}. By \eqref{jumps for R simple asymptotics}, we have $J_{R}(z) = \bigO(r_{j}-r_{j-1})$ as $r_{j} \to r_{j-1}$ uniformly for $z \in \partial U_{j}$. Thus, by standard theory for small norm RH problems, $R$ exists for sufficiently small $r_{j}-r_{j-1}$ and satisfies
\begin{equation}\label{asymp R merging r}
R(z) = I + \bigO(r_{j}-r_{j-1}), \qquad R^{\prime}(z) = \bigO(r_{j}-r_{j-1}),
\end{equation}
uniformly for $z \in \mathbb{C}\setminus  \partial U_{j}$. From \eqref{lol 4}, \eqref{lol 5} together with \eqref{asymp R merging r}, we have 
\begin{equation}
b_{\ell}(\sqrt{x};\vec{r},\vec{s}) = b_{\ell}(\sqrt{x};\vec{r}^{[j]},\vec{s}^{[j]})+\bigO(r_{j}-r_{j-1}), \quad \mbox{ for any } \ell \in \{1,...,k\}\setminus\{ j,j-1 \},
\end{equation}
Let $j \in \{2,...,k\}$. From \eqref{lolo 3}, \eqref{lol 6} and \eqref{asymp R merging r}, we obtain
\begin{align}
& b_{j-1}(\sqrt{x};\vec{r},\vec{s}) = \frac{s_{j}-s_{j-1}}{s_{j+1}-s_{j-1}} b_{j-1}(\sqrt{x};\vec{r}^{[j]},\vec{s}^{[j]})+\bigO(r_{j}-r_{j-1}), \\
& b_{j}(\sqrt{x};\vec{r},\vec{s}) = \frac{s_{j+1}-s_{j}}{s_{j+1}-s_{j-1}}b_{j-1}(\sqrt{x};\vec{r}^{[j]},\vec{s}^{[j]})+\bigO(r_{j}-r_{j-1}).
\end{align}
This proves part 2 of Theorem \ref{thm: residual thm}.
If $j = 1$, the computations are very similar to \eqref{lol 19}. From \eqref{lolo 4}, \eqref{lol 6} and \eqref{asymp R merging r}, we obtain
\begin{equation}
b_{1}(\sqrt{x};\vec{r},\vec{s}) = i\sqrt{x}\frac{s_{2}-s_{1}}{2\pi i}(r_{1}x)^{\alpha}(\Phi_{0,0}^{\star}(-r_{1}x))_{11}^{2}(1+\bigO(r_{1}))= \bigO(r_{1}^{\alpha}),
\end{equation}
which is the part 3 of Theorem \ref{thm: residual thm}.

\appendix
\section{Bessel model RH problem}\label{Appendix: Bessel}
In this appendix, we present the well-known Bessel model RH problem, whose solution is denoted  $P_{\mathrm{Be}}$ and depends on a parameter $\alpha >-1$. At the end of the appendix, we also define $\widetilde{P}_{\mathrm{Be}}$, which a obtained from $P_{\mathrm{Be}}$ by a simple transformation and satisfied a modified version of the Bessel model RH problem.
\subsection*{RH problem for $P_{\mathrm{Be}}( z ) = P_{\mathrm{Be}}( z ;\alpha)$}
\begin{itemize}
\item[(a)] $P_{\mathrm{Be}} : \mathbb{C} \setminus \Sigma_{B} \to \mathbb{C}^{2\times 2}$ is analytic, where
$\Sigma_{B}$ is shown in Figure \ref{figBessel}.
\item[(b)] $P_{\mathrm{Be}}$ satisfies the jump conditions
\begin{equation}\label{Jump for P_Be}
\begin{array}{l l} 
P_{\mathrm{Be},+}( z ) = P_{\mathrm{Be},-}( z ) \begin{pmatrix}
0 & 1 \\ -1 & 0
\end{pmatrix}, &  z \in \mathbb{R}^{-}, \\

P_{\mathrm{Be},+}( z ) = P_{\mathrm{Be},-}( z ) \begin{pmatrix}
1 & 0 \\ e^{\pi i \alpha} & 1
\end{pmatrix}, &  z  \in e^{\frac{2\pi i}{3} }\mathbb{R}^{+}, \\

P_{\mathrm{Be},+}( z ) = P_{\mathrm{Be},-}( z ) \begin{pmatrix}
1 & 0 \\ e^{-\pi i \alpha} & 1
\end{pmatrix}, &  z  \in e^{-\frac{2\pi i}{3} }\mathbb{R}^{+}. \\
\end{array}
\end{equation}
\item[(c)] As $ z \to \infty$, $ z  \notin \Sigma_{B}$, we have
\begin{equation}\label{large z asymptotics Bessel}
P_{\mathrm{Be}}( z ) =  
\left(I+\bigO( z ^{-1})\right)  z ^{\frac{-\sigma_{3}}{4}} N e^{ z ^{\frac{1}{2}}\sigma_{3}}.
\end{equation}
\item[(d)] As $ z $ tends to 0, the behaviour of $P_{\mathrm{Be}}( z )$ is
\begin{equation}\label{local behaviour near 0 of P_Be}
\begin{array}{l l}
\displaystyle P_{\mathrm{Be}}( z ) = \left\{ \begin{array}{l l}
\bigO(1) z ^{\frac{\alpha}{2}\sigma_{3}}, & |\arg  z  | < \frac{2\pi}{3}, \\
\bigO( z ^{-\frac{\alpha}{2}}), & \frac{2\pi}{3}<|\arg  z  | < \pi,
\end{array} \right. , & \displaystyle \mbox{ if } \alpha > 0, \\[0.3cm]
\displaystyle P_{\mathrm{Be}}( z ) = \bigO(\log  z ), & \displaystyle \mbox{ if } \alpha = 0, \\[0.3cm]
\displaystyle P_{\mathrm{Be}}( z ) = \bigO( z ^{\frac{\alpha}{2}}), & \displaystyle \mbox{ if } \alpha < 0.
\end{array}
\end{equation}
\end{itemize}
Note that by deleting the jumps of $P_{\mathrm{Be}}$ around the origin, we obtain the relation
\begin{equation}\label{local Bessel behaviour}
P_{\mathrm{Be}}( z ) = P_{\mathrm{Be},0}( z ) z ^{\frac{\alpha}{2}\sigma_{3}}\begin{pmatrix}
1 & h( z ) \\ 0 & 1
\end{pmatrix} H_{0}( z ), \qquad z \in \mathbb{C}\setminus \Sigma_{B},
\end{equation}
where $P_{\mathrm{Be},0}$ is an entire function, $h$ is defined in \eqref{def of h} and $H_{0}$ is defined in \eqref{def of H}.
\begin{figure}[t]
    \begin{center}
    \setlength{\unitlength}{1truemm}
    \begin{picture}(100,55)(-5,10)
        \put(50,40){\line(-1,0){30}}
        \put(50,39.8){\thicklines\circle*{1.2}}
        \put(50,40){\line(-0.5,0.866){15}}
        \put(50,40){\line(-0.5,-0.866){15}}
        \put(50.3,36.8){$0$}
        \put(35,39.9){\thicklines\vector(1,0){.0001}}
        \put(41,55.588){\thicklines\vector(0.5,-0.866){.0001}}
        \put(41,24.412){\thicklines\vector(0.5,0.866){.0001}}
    \end{picture}
    \caption{\label{figBessel}The jump contour $\Sigma_{B}$ for $P_{\mathrm{Be}}( z )$.}
\end{center}
\end{figure}
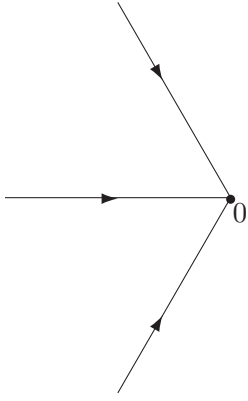
It was shown in \cite{KMcLVAV} that the unique solution to this RH problem is given by 
\begin{equation}\label{def of model Bessel}
P_{\mathrm{Be}}( z )= \begin{pmatrix}
1 & 0 \\ \frac{i}{8}(4\alpha^{2}+3) & 1
\end{pmatrix} \pi^{\frac{\sigma_{3}}{2}} \begin{pmatrix}
I_{\alpha}( z ^{\frac{1}{2}}) & \frac{ i}{\pi} K_{\alpha}( z ^{\frac{1}{2}}) \\
\pi i  z ^{\frac{1}{2}} I_{\alpha}^{\prime}( z ^{\frac{1}{2}}) & -  z ^{\frac{1}{2}} K_{\alpha}^{\prime}( z ^{\frac{1}{2}})
\end{pmatrix} H_{0}( z ).
\end{equation}
where $I_\alpha$ and $K_\alpha$ are the modified Bessel functions of the first and second kind. 

\hspace{-0.6cm}Note that $\begin{pmatrix}
I_{\alpha}( z ^{\frac{1}{2}}) & \frac{ i}{\pi} K_{\alpha}( z ^{\frac{1}{2}}) \\
\pi i  z ^{\frac{1}{2}} I_{\alpha}^{\prime}( z ^{\frac{1}{2}}) & -  z ^{\frac{1}{2}} K_{\alpha}^{\prime}( z ^{\frac{1}{2}})
\end{pmatrix} H_{0}( z )$ can be rewritten as
\begin{equation}
\begin{cases}
\begin{pmatrix}
I_{\alpha}( z ^{\frac{1}{2}}) & \frac{ i}{\pi} K_{\alpha}( z ^{\frac{1}{2}}) \\
\pi i  z ^{\frac{1}{2}} I_{\alpha}^{\prime}( z ^{\frac{1}{2}}) & -  z ^{\frac{1}{2}} K_{\alpha}^{\prime}( z ^{\frac{1}{2}})
\end{pmatrix}, & \mbox{ if } |\arg  z  | < \frac{2\pi}{3}, \\

\begin{pmatrix}
\frac{1}{2} H_{\alpha}^{(1)}((- z )^{\frac{1}{2}}) & \frac{1}{2} H_{\alpha}^{(2)}((- z )^{\frac{1}{2}}) \\
\frac{1}{2}\pi  z ^{\frac{1}{2}} \left( H_{\alpha}^{(1)} \right)^{\prime} ((- z )^{\frac{1}{2}}) & \frac{1}{2}\pi  z ^{\frac{1}{2}} \left( H_{\alpha}^{(2)} \right)^{\prime} ((- z )^{\frac{1}{2}})
\end{pmatrix}e^{\frac{\pi i \alpha}{2}\sigma_{3}}, & \mbox{ if }\frac{2\pi}{3} < \arg  z  < \pi, \\

\begin{pmatrix}
\frac{1}{2} H_{\alpha}^{(2)}((- z )^{\frac{1}{2}}) & -\frac{1}{2} H_{\alpha}^{(1)}((- z )^{\frac{1}{2}}) \\
-\frac{1}{2}\pi  z ^{\frac{1}{2}} \left( H_{\alpha}^{(2)} \right)^{\prime} ((- z )^{\frac{1}{2}}) & \frac{1}{2}\pi  z ^{\frac{1}{2}} \left( H_{\alpha}^{(1)} \right)^{\prime} ((- z )^{\frac{1}{2}})
\end{pmatrix}e^{-\frac{\pi i \alpha}{2}\sigma_{3}}, & \mbox{ if } -\pi < \arg  z  < -\frac{2\pi}{3},
\end{cases}
\end{equation}
where $H_{\alpha}^{(1)}$ and $H_{\alpha}^{(2)}$ are the Hankel functions of the first and second kind respectively.

\vspace{0.2cm}\hspace{-0.6cm}We will also use a modified version of the above RH problem. We define
\begin{equation}\label{def of P_Be tilde}
\widetilde{P}_{\mathrm{Be}}( z ) = e^{-\frac{\pi i}{4}\sigma_{3}}\pi^{\frac{\sigma_{3}}{2}} \begin{pmatrix}
I_{\alpha}( z ^{\frac{1}{2}}) & \frac{ i}{\pi} K_{\alpha}( z ^{\frac{1}{2}}) \\
\pi i  z ^{\frac{1}{2}} I_{\alpha}^{\prime}( z ^{\frac{1}{2}}) & -  z ^{\frac{1}{2}} K_{\alpha}^{\prime}( z ^{\frac{1}{2}})
\end{pmatrix} H_{-x_{k}}( z ).
\end{equation}
From \eqref{def of model Bessel}, we have that $\widetilde{P}_{\mathrm{Be}}$ has exactly the same jumps than $P_{\mathrm{Be}}$ on $\Sigma_{1} \cup \Sigma_{2} \cup (-\infty,-x_{k})$. We can compute the jumps of $\widetilde{P}_{\mathrm{Be}}$ on $(-x_{k},0)$ either from the properties of the Bessel functions, or from the jumps of $P_{\mathrm{Be}}$. We obtain
\begin{equation}
\widetilde{P}_{\mathrm{Be},+}( z ) = \widetilde{P}_{\mathrm{Be},-}(z) \begin{pmatrix}
e^{\pi i \alpha} & 1 \\ 0 & e^{-\pi i \alpha}
\end{pmatrix}, \qquad z \in (-x_{k},0).
\end{equation}
Also, from \eqref{large z asymptotics Bessel}, as $ z \to \infty$, $ z  \notin \Sigma_{\Phi}$, we have
\begin{equation}\label{large z asymptotics modified Bessel}
\widetilde{P}_{\mathrm{Be}}( z ) =  e^{-\frac{\pi i}{4}\sigma_{3}}
\begin{pmatrix}
1 & 0 \\ \frac{-i}{8}(4\alpha^{2}+3) & 1
\end{pmatrix}\left(I+\bigO( z ^{-1})\right)  z ^{\frac{-\sigma_{3}}{4}} N e^{ z ^{\frac{1}{2}}\sigma_{3}}.
\end{equation}

\section*{Acknowledgements}
C. Charlier was supported by the European Research Council under the European Union's Seventh Framework Programme (FP/2007/2013)/ ERC Grant Agreement n.\, 307074. Both authors also acknowledge support by the Belgian Interuniversity Attraction Pole P07/18. We acknowledge the anonymous referee for a careful reading and for useful remarks.


\begin{thebibliography}{99}
\bibitem{AtkChaZoh} M. Atkin, C. Charlier and S. Zohren, On the ratio probability of the smallest eigenvalues in the Laguerre Unitary Ensemble, arXiv:1611.00631.

\bibitem{BertolaCafasso} M. Bertola and M. Cafasso, Riemann-Hilbert approach to multi-time processes: the Airy and the Pearcey cases, {\em Phys. D} {\bf 241} (2012), no. 23--24, 2237--2245.

\bibitem{BohigasPato1} O. Bohigas and M.P. Pato, Missing levels in correlated spectra, {\em Phys. Lett.} {\bf B595} (2004), 171--176.

\bibitem{BohigasPato2} O. Bohigas and M.P. Pato, Randomly incomplete spectra and intermediate statistics,  \textit{Phys. Rev. E} (3) {\bf 74} (2006).
%

\bibitem{ChCl2} C. Charlier and T. Claeys, Thinning and conditioning of the Circular Unitary Ensemble, \textit{Random Matrices Theory Appl.} \textbf{6} (2017), 51 pp.

\bibitem{ClaeysDoeraene} T. Claeys and A. Doeraene, The generating function for the Airy point process and a system of coupled Painlev\'e II equations, arXiv:1708.03481.

\bibitem{Deift} P. Deift,
{\em Orthogonal Polynomials and Random Matrices: A Riemann-Hilbert Approach}, Amer. Math. Soc. \textbf{3} (2000).

\bibitem{DeiftItsZhou} P. Deift, A. Its, and X. Zhou, A Riemann-Hilbert approach to asymptotic problems arising in the theory of random matrix models, and also in the theory of integrable statistical mechanics, \emph{Ann. Math.} \textbf{278} (1997), 149--235.

\bibitem{DKMVZ2}
P. Deift, T. Kriecherbauer, K.T-R McLaughlin, S. Venakides and X. Zhou, Uniform asymptotics for polynomials orthogonal with respect to varying exponential weights and applications to universality questions in random matrix theory, {\em Comm. Pure Appl. Math.} {\bf 52} (1999), 1335--1425.
    
\bibitem{DKMVZ1}
P. Deift, T. Kriecherbauer, K.T-R McLaughlin, S. Venakides and X. Zhou, Strong asymptotics of orthogonal polynomials with respect to exponential weights, {\em Comm. Pure Appl. Math.} {\bf 52} (1999), 1491--1552.
    
\bibitem{DeiftZhou1992} P. Deift and X. Zhou, A steepest descent method for oscillatory Riemann-Hilbert problems, \textit{ Bull. Amer. Math. Soc. (N.S.)} \textbf{26} (1992), 119--123.

\bibitem{DeiftZhou} P. Deift and X. Zhou, A steepest descent method for oscillatory Riemann-Hilbert problems. Asymptotics for the MKdV equation, {\em Ann. Math.} {\bf 137} (1993), 295--368.

\bibitem{Schwartz} N. Dunford and J. Schwartz, \textit{Linear operators, part II: spectral theory}, Interscience, New York (1963).

\bibitem{For1} P.J. Forrester, The spectrum edge of random matrix ensembles, \textit{Nuclear Phys. B} \textbf{402} (1993), 709--728.

\bibitem{ForNag} P.J. Forrester and T. Nagao, Asymptotic correlations at the spectrum edge of random matrices, \textit{Nuclear Phys. B} \textbf{435} (1995), 401--420.

\bibitem{ForWit} P.J. Forrester and N.S. Witte, The distribution of the first eigenvalue spacing at the hard edge of the Laguerre unitary ensemble,  \textit{Kyushu J. Math.} \textbf{61} (2007), no. 2, 457--526. 

\bibitem{KuijlaarsMartinezFinkelshteinWielonsky} A.B.J. Kuijlaars, A. Mart\'{i}nez-Finkelshtein and F. Wielonsky, Non-intersecting squared Bessel paths and multiple orthogonal polynomials for modified Bessel weights. \textit{Comm. Math. Phys.} \textbf{286} (2009), 217--275.

\bibitem{KMcLVAV} A.B.J. Kuijlaars, K.T.--R. McLaughlin, W. Van Assche and M. Vanlessen, The Riemann-Hilbert approach to strong asymptotics for orthogonal polynomials on $[-1,1]$, {\it Adv. Math.} \textbf{188} (2004), 337--398.

\bibitem{IIKS} A. Its, A.G. Izergin, V.E. Korepin and N.A. Slavnov, Differential equations for quantum correlation functions, \emph{In proceedings of the Conference on Yang-Baxter Equations, Conformal Invariance and Integrability in Statistical Mechanics and Field Theory}, Volume \textbf{4}, (1990) 1003--1037.

\bibitem{LMR}F. Lavancier, J. Moller and E. Rubak, Determinantal point process models
and statistical inference:
Extended version, {\em J. Royal Stat. Soc.: Series B} {\bf 77} (2015), no. 4, 853--877.

\bibitem{Mehta} M.L. Mehta, Random matrices, Third Edition, \textit{Pure and Applied Mathematics Series} \textbf{142}, Elsevier Academic Press, 2004.

\bibitem{NIST} F.W.J. Olver, A.B. Olde Daalhuis, D.W. Lozier, B.I. Schneider, R.F. Boisvert, C.W. Clark, B.R. Miller and B.V. Saunders, NIST Digital Library of Mathematical Functions. http://dlmf.nist.gov/, Release 1.0.13 of 2016-09-16.

\bibitem{Soshnikov} A. Soshnikov, Determinantal random point fields, \textit{ Russian Math. Surveys} \textbf{55} (2000), no. 5, 923--975.

\bibitem{TraWidLUE} C.A. Tracy and H. Widom, Level spacing distributions and the Bessel kernel.  \textit{Comm. Math. Phys.} \textbf{161} (1994), no. 2, 289--309.

\bibitem{Vanlessen} M. Vanlessen, Strong asymptotics of Laguerre-type orthogonal polynomials and applications in random matrix theory, \textit{Constr. Approx.} \textbf{25} (2007), 125--175. 

\bibitem{XuDai} S.--X. Xu and D. Dai, Tracy-Widom distributions in critical unitary random matrix ensembles and the coupled Painlev\'{e} II system, arXiv:1708.06113.

\end{thebibliography}
\end{document}